\documentclass[submission,copyright,creativecommons]{eptcs}

\usepackage{breakurl}             
\usepackage{amssymb}
\usepackage{amsmath}
\usepackage{mathrsfs}
\usepackage{stmaryrd}
\usepackage{amsthm}
\usepackage{xifthen} 
\usepackage{rotating}
\usepackage{enumerate}

\newtheorem{theorem}{Theorem}
\newtheorem{definition}[theorem]{Definition}

\newtheorem{lemma}[theorem]{Lemma}

\newtheorem{proposition}[theorem]{Proposition}







\newcommand{\bB}{\mathbf{B}}
\newcommand{\bC}{\mathbf{C}}

\newcommand{\bI}{\mathbf{I}}

\newcommand{\bR}{\mathbf{R}}
\newcommand{\bS}{\mathbf{S}}
\newcommand{\bT}{\mathbf{T}}

\newcommand{\mbbK}{\mathbb{K}}

\newcommand{\ga}{\alpha}
\newcommand{\gb}{\beta}
\newcommand{\gc}{\gamma}
\newcommand{\gd}{\delta}

\newcommand{\gs}{\sigma}
\newcommand{\gt}{\ensuremath{\tau}}

\newcommand{\go}{\omega}





\newcommand{\cA}{\mathcal{A}}

\newcommand{\cC}{\mathcal{C}}
\newcommand{\cD}{\mathcal{D}}

\newcommand{\cI}{\mathcal{I}}

\newcommand{\cK}{\mathcal{K}}

\newcommand{\cP}{\mathcal{P}}

{\begin{list}%
       {-}%
       {\setlength{\itemsep}{0pt}
     \setlength{\parsep}{3pt}
     \setlength{\topsep}{3pt}
     \setlength{\partopsep}{0pt}
     \setlength{\leftmargin}{0.7em}
     \setlength{\labelwidth}{1em}
     \setlength{\labelsep}{0.3em}}}%
{\end{list}}

{\begin{list}%
       {-}%
       {\setlength{\itemsep}{0pt}
     \setlength{\parsep}{2pt}
     \setlength{\topsep}{2pt}
     \setlength{\partopsep}{0pt}
     \setlength{\leftmargin}{2em}
     \setlength{\labelwidth}{1em}
     \setlength{\labelsep}{0.3em}}}%
{\end{list}}

\newcommand{\mbz}{\mathbf{0}}

\newcommand{\rea}[1]{\mathsf{rea}(#1)} 
\newcommand{\realize}{\Vdash} 
\newcommand{\nat}{\mathbb{N}}

\newcommand{\st}{:} 

\newcommand{\Omegatuple}[1]{\Mfin{#1}^{(\omega)}}





\newcommand{\labelto}[1]{\rightarrow_{#1}} 
\newcommand{\mslabelot}[1]{\ _{#1}\!\twoheadleftarrow} 
\newcommand{\mslabelto}[1]{\twoheadrightarrow_{#1}} 







\newcommand{\rrank}[1]{\mathsf{ar}(#1)} 
\newcommand{\termin}[3]{\mathsf{tmn}_{#1}^{#2}(#3)} 
\newcommand{\prem}[3]{\mathsf{pr}_{#1}^{#2}(#3)} 
\newcommand{\premterm}[3]{\mathsf{prt}_{#1}^{#2}(#3)} 

\newcommand{\Mfin}[1]{\mathcal{M}_{\mathrm{f}}(#1)}
\newcommand{\mcup}{\uplus}

\newcommand{\Rel}{\mathbf{Rel}} 
\newcommand{\Mon}[1]{\mathrm{Mon}_{\pa}(#1)} 
\newcommand{\CPO}{\mathbf{CPO}} 

\newcommand{\gramm}{\mathrel{::=}} 
\newcommand{\ass}{\mathrel{:=}} 



\renewcommand{\iff}{\Leftrightarrow}
\newcommand{\seq}[1]{\vec{#1}}



\newcommand{\Ps}[1]{\cP(#1)} 
\newcommand{\Pss}[1]{\cP_\mathrm{s}(#1)} 
\newcommand{\cons}{\!::\!} 
\newcommand{\at}{\!\centerdot} 
\newcommand{\ldts}{,..,} 


\newcommand{\s}{\mathsf{s}} 
\newcommand{\eo}{\eta_1} 
\newcommand{\et}{\eta_2} 

\newcommand{\cdr}[1]{\mathsf{cdr}(#1)} 
\newcommand{\car}[1]{\mathsf{car}(#1)} 
\newcommand{\cddr}[2]{#1[#2)} 
\newcommand{\cadr}[2]{#1[#2]} 

\newcommand{\nil}{\mathsf{nil}} 

\newcommand{\sub}[2]{\{#1/#2\}} 
\newcommand{\ssub}[2]{\{#1/^*#2\}} 
\newcommand{\ap}{\star} 
\newcommand{\bd}{\mu} 
\newcommand{\Te}[1]{#1^{\circ}} 

\newcommand{\Neg}[1]{#1^{-}} 
\newcommand{\Pos}[1]{#1^{+}} 

\renewcommand{\Form}{\mathsf{Fm}} 
\newcommand{\At}{\mathsf{At}} 
\newcommand{\Var}{\mathrm{Var}} 
\newcommand{\FV}{\mathrm{FV}} 
\newcommand{\FN}{\mathrm{FN}} 
\newcommand{\LTer}[1]{\Lambda^{\mathsf{#1}}} 
\newcommand{\KTer}[1]{\Sigma^{\mathsf{#1}}} 

\newcommand{\SN}[1]{\mathrm{SN}^{\mathsf{#1}}} 



\newcommand{\bbot}{
\mathrel{\vcenter{\offinterlineskip
\vskip-.130ex\hbox{\begin{turn}{90}$\models$\end{turn}}}}} 



\newcommand{\asm}{\! : \!} 

\newcommand{\tval}[1]{\vert #1\vert} 
\newcommand{\fval}[1]{\lVert #1 \rVert} 
\newcommand{\Int}[1]{\llbracket #1\rrbracket} 
\newcommand{\id}{\mathsf{id}} 

\newcommand{\copair}[2]{[ #1, #2 ]} 

\newcommand{\cur}{\Lambda} 

\newcommand{\ctrliso}{\varphi} 





\newcommand{\lmc}{\ensuremath{\lambda\mathcal{C}}} 
\newcommand{\lmu}{\ensuremath{\lambda\mu}} 
\newcommand{\ort}[1]{#1^{\bot}} 

\newcommand{\wi}{\binampersand} 
\newcommand{\pa}{\bindnasrepma} 


\newcommand{\mon}{\mathsf{m}} 


\newcommand{\dig}{\delta} 
\newcommand{\der}{\varepsilon} 

\newcommand{\alg}{\mathsf{alg}} 

\newcommand{\mul}{c} 
\newcommand{\uni}{w} 
\newcommand{\bmul}{\textbf{c}} 
\newcommand{\buni}{\textbf{w}} 
\newcommand{\comul}{d} 
\newcommand{\couni}{e} 
\newcommand{\bcomul}{\textbf{d}} 
\newcommand{\bcouni}{\textbf{e}} 




\newcommand{\teid}{\mathbf{1}} 
\newcommand{\bon}{\mathbf{1}} 


\newcommand{\foc}[1]{#1^\sharp} 




\newcommand{\varrule}{\mathsf{ax}} 
\newcommand{\absrule}[1]{\to i,{#1}} 
\newcommand{\apprule}{\to e} 
\newcommand{\namrule}{\bot i} 
\newcommand{\murule}[1]{\bot e,{#1}} 

\newcommand{\carrule}{\to e_r} 
\newcommand{\cdrrule}{\to e_l} 
\newcommand{\atrule}{\to i} 
\newcommand{\aprule}{\mathsf{cut}} 
\newcommand{\nilrule}{\bot i} 
\newcommand{\bdrule}[1]{\bd,{#1}} 



\newcommand{\leng}[1]{\sharp #1} 


\newcommand{\bang}{!} 
\newcommand{\app}{\mathrm{Ap}} 
\newcommand{\lam}{\mathrm{La}} 

\newcommand{\cln}{{\ :\ }} 
\newcommand{\semicln}{\mid} 
\newcommand{\tystk}[4]{%
\ifthenelse{\equal{#1}{s}\OR\equal{#1}{t}}{
	\ifthenelse{\equal{#1}{s}}{
		\ifthenelse{\isempty{#2}}{#3 \vdash #4}{{#2} \cln #3 \vdash #4}
		}{
		\ifthenelse{\isempty{#2}}{\vdash #3, #4}{\vdash {#2}\cln #3 \semicln #4}
		}
	}{
	\ifthenelse{\isempty{#2}}{\vdash #4}{\vdash {#2} \semicln #4}
	}
}
\newcommand{\ntystk}[4]{%
\ifthenelse{\equal{#1}{s}\OR\equal{#1}{t}}{
	\ifthenelse{\equal{#1}{s}}{
		\ifthenelse{\isempty{#2}}{#3 \nvdash #4}{{#2} \cln #3 \nvdash #4}
		}{
		\ifthenelse{\isempty{#2}}{\nvdash #3, #4}{\nvdash {#2} \cln #3 \semicln #4}
		}
	}{
	\ifthenelse{\isempty{#2}}{\nvdash #4}{\nvdash {#2} \semicln #4}
	}
}

\newcommand{\tylmu}[4]{
	\ifthenelse{\isempty{#3}}
	{
	\ifthenelse{\isempty{#4}}
		{#1 \vdash_{\lmu} {#2}}
		{#1 \vdash_{\lmu} {#2} \semicln #4}
	}
	{
        \ifthenelse{\isempty{#4}}
        	{#1 \vdash_{\lmu} {#2} \cln #3}
        	{#1 \vdash_{\lmu} {#2} \cln #3 \semicln #4}
	}
}

 








\newcommand{\transition}{\longrightarrow} 
\newcommand{\transitions}{-\!\!\!\!\twoheadrightarrow} 

\newcommand{\Ktclos}[2]{({#1},{#2})} 
\newcommand{\Ksclos}[2]{({#1},{#2})} 
\newcommand{\Kstate}[2]{\mbox{\boldmath{$\langle$}}{#1},{#2}\mbox{\boldmath{$\rangle$}}} 

\newcommand{\otspam}{
\mathrel{\vcenter{\offinterlineskip
\vskip-.130ex\hbox{\begin{turn}{180}$\mapsto$\end{turn}}}}} 

\newcommand{\envup}[3]{#1[#2 \otspam #3]} 

\newcommand{\lab}[2]{\mathsf{lab}_{#1}\{#2\}} 
\newcommand{\res}[2]{\mathsf{res}_{#1}\{#2\}} 

\newcommand{\deltrycatch}[3]{\mathsf{try}_{#1}\{#2\}\mathsf{catch}\{#3\}} 
\newcommand{\delthrow}[2]{\mathsf{throw}_{#1}\{#2\}} 



\input prooftree.sty

\title{The stack calculus}
\author{
Alberto Carraro
\institute{PPS, Universit\'{e} Denis Diderot Paris, France}
\email{acarraro@pps.univ-paris-diderot.fr}
\and
Thomas Ehrhard
\institute{PPS, Universit\'{e} Denis Diderot Paris, France}
\email{thomas.ehrhard@pps.univ-paris-diderot.fr}
\and
Antonino Salibra
\institute{DAIS, Universit\`{a} Ca' Foscari Venezia, Italia}
\email{salibra@dsi.unive.it}
}

\begin{document}
\maketitle

\begin{abstract}
We introduce a functional calculus with simple syntax and operational semantics in which the calculi introduced so far in the Curry\textendash Howard correspondence for
 Classical Logic can be faithfully encoded. Our calculus enjoys confluence without any restriction. Its type system enforces strong normalization of expressions
 and it is a sound and complete system for full implicational Classical Logic. We give a very simple denotational semantics which allows easy calculations
 of the interpretation of expressions.
\end{abstract}

\section{Introduction}

The Curry\textendash Howard correspondence \cite{Howard80} was first designed as the isomorphism between natural deduction for minimal Intuitionistic Logic \cite{Prawitz65} and the
 simply typed $\lambda$-calculus, and for a long time no one thought this isomorphism could be extended to Classical Logic, until Griffin \cite{Griffin90} proposed
 that natural deduction for Classical Logic could be viewed as a type system for a $\lambda$-calculus extended with a control operator $\cC$, introduced by Felleisen
 in his \lmc-calculus \cite{Felleisen92}. There are also other operators that correspond to logical axioms that, once added to minimal Intuitionistic Logic, give
 proof systems of different power, from minimal to full implicational Classical Logic. Felleisen's $\cC$, corresponding to the \emph{Double-Negation Elimination law},
 gives full implicational Classical Logic; less powerful operators are $\cK$ (a.k.a. $\mathsf{call/cc}$), typable with \emph{Peirce's law}, and $\cA$ (a.k.a. \emph{abort})
 typable with the \emph{Ex-Falso Quodlibet law}. On the programming side, this classification corresponds to the different expressive power of the operators as
 control primitives. Ariola and Herbelin \cite{Herbelin03} survey and classify these logical systems and introduce a refinement of \lmc-calculus which aims at
 resolving a mismatch between the operational and proof-theoretical interpretation of Felleisen's \lmc-reduction theory. 

Another extension of the $\lambda$-calculus is Parigot's \lmu-calculus \cite{Parigot92} which introduces a Natural Deduction with multiple conclusions. This system
 implements minimal Classical Logic and it is able to encode the primitive $\mathsf{call/cc}$; Ariola and Herbelin \cite{Herbelin03} extend it to cover full Classical
 Logic and compare their system with Felleisen's \lmc-calculus: similar studies are made by De Groote \cite{Groote99}. The correspondence between classical principles
 and functional control operators is further stressed by De Groote's extension of $\lambda$-calculus with \emph{raise/handle} primitives \cite{Groote95}.
 While the untyped version of \lmu-calculus enjoys confluence, its extensional version is only confluent on closed terms via the addition of a rewrite
 rule that destroys the strong normalization of typable terms \cite{David01}.

Gentzen's sequent calculus $\mathsf{LK}$ \cite{Gentzen35} is put in correspondence with a reduction system by Urban \cite{UrbanThese};
 the type system of Curien\textendash Herbelin's $\bar{\lambda}\mu\tilde{\mu}$-calculus \cite{Herbelin00}
 corresponds to its implicational fragment. These two approaches are compared in detail by Lengrand \cite{Lengrand03}.
 These calculi highlight the duality between call-by-value and call-by-name cut-elimination
 (or evaluation): confluence is not achievable without choosing one of the two strategies.
 Other computational interpretations of Classical sequent calculus are Girard's LC \cite{Gir91} and the
 translations of Classical Logic in Linear Logic \cite{DanosJS95}, based upon linear dual decomposition of classical implication.

In this paper we introduce the \emph{stack calculus}. The idea of this calculus comes from a synthesis of Krivine's extension of the $\lambda$-calculus with
 \emph{stacks} and $\mathsf{call/cc}$ \cite{Krivine01} with Parigot's \lmu-calculus. It also bears similarities with the call-by-name variant of
 $\bar{\lambda}\mu\tilde{\mu}$-calculus. In Krivine's Classical Realizability \cite{Krivine01} classical implication is associated to a stack constructor, while in
 \lmu-calculus (as in \lmc-calculus) the arrow-type is introduced by an intuitionistic $\lambda$-abstraction: the role of the $\mu$-abstraction is to make it
 classical by ``merging together'' many intuitionistic arrows. The $\mu$-abstraction can then be thought of as a functional abstraction over \emph{lists of inputs},
 corresponding to a list of consecutive $\lambda$-abstractions. This idea is used in the design of L\"{o}w\textendash Streicher's $\mathsf{CPS}_\infty$-calculus
 \cite{LowS06} which is an infinitary version of $\lambda$-calculus that allows only infinite abstractions and infinite applications.

The stack calculus is a finitary functional language in which stacks are first-class entities, and many of the previously-mentioned
 calculi can be faithfully translated. The stack calculus enjoys confluence without any restriction, also in its extensional version.
 We type the stack calculus with a propositional language with implication and falsity, to be associated to stack construction and empty stack, respectively.
 As a consequence one obtains a sound and complete system for full implicational Classical Logic.
 In our case the realizability interpretation of types \`{a} la Krivine matches perfectly the logical meaning of the arrow in the type system:
 proofs of soundness and strong normalization of the calculus are both given by particular realizability interpretations.
 The simplicity of the stack calculus, which does not use at the same time $\lambda$- and $\mu$-abstractions allows an
 easy encoding of control primitives like $\mathsf{call/cc}$, $\mathsf{label}/\mathsf{resume}$, $\mathsf{raise}/\mathsf{catch}$.

Many researchers contributed to the study of proof semantics of Classical Logic. From Girard \cite{Gir91}, to Reus and Streicher \cite{Stre98},
 to Selinger \cite{Selinger01} who gives a general presentation in terms of \emph{control categories}. It is also very interesting the work by
 Laurent and Regnier \cite{Laurent03} which shows in detail how to extract a control category out of a categorical model of
 Multiplicative Additive Linear Logic (MALL).

Inspired by Laurent and Regnier's work \cite{Laurent03} we give a minimal framework in which the stack calculus can be soundly interpreted.
 The absence of the $\lambda$-abstraction, allows us to focus on the minimal structure required to interpret Laurent's Polarized
 Linear Logic \cite{Laurent03b} and to use it to interpret the stack calculus. The simplicity of the framework gives an easy calculation of the semantics of expressions.

\section{The untyped stack calculus}\label{sec:untyped_stack}

The stack calculus has three syntactic categories: \emph{terms} that are in functional position, \emph{stacks} that are in argument position and
 represent streams of arguments, \emph{processes} that are terms applied to stacks. 
 The basis for the definition of the stack calculus language is a countably infinite set of \emph{stack variables}, ranged over by the
 initial small letters $\alpha,\beta,\gamma,\ldots$ of the greek alphabet. The language is then given by the following grammar:
\[
\begin{array}{llll}
\pi,\varpi & \gramm & \ga \mid \nil \mid M \at \pi \mid \cdr \pi & \text{stacks} \qquad\qquad\qquad\qquad\qquad  \\
M,N        & \gramm & \bd \ga.P \mid \car \pi                    & \text{terms} \\
P,Q        & \gramm & M \ap \pi                                  & \text{processes}
\end{array}
\]
We use letters $E,E'$ to range over \emph{expressions} which are either stacks, terms or processes. We denote by $\KTer{p}$, $\KTer{s}$, $\KTer{t}$, and $\KTer{e}$ the sets of all processes, stacks, terms, and expressions respectively. The operator $\bd$ is a binder. An occurrence of a variable $\ga$ in an expression $E$ is \emph{bound} if it is under the scope of a $\bd\ga$; the set $\FV(E)$ of \emph{free variables} is made of those variables having a non-bound occurrence in $E$.

\noindent \emph{Stacks} represent lists of terms: $\nil$ is the empty stack. A stack $M_1\at \cdots \at M_k \at \nil$, stands for a finite list while a stack $M_1\at \cdots \at M_k \at \ga$ stands for a non-terminated list that can be further extended.\\
\noindent \emph{Terms} are entities that wait for a stack to compute. A term $\bd\alpha.P$ is the \emph{$\bd$-abstraction} of $\alpha$ in $P$.\\
\noindent \emph{Processes} result from the \emph{application} $M\ap\pi$ of a term $M$ to a stack $\pi$. This application, unlike in $\lambda$-calculus, has to be thought as \emph{exhaustive} and gives rise to an evolving entity that does not have any outcome.

Application has precedence over $\bd$-abstraction and the stack constructor has precedence over application, so that the
 term $\bd\ga.M \ap N\at\pi$ unambiguously abbreviates $\bd\ga.(M\ap (N\at\pi))$.
 As usual, the calculus involves a substitution operator. By $E\sub{\pi}{\ga}$ we denote the (capture-avoiding) substitution of the stack $\pi$ for all free occurrences of $\ga$ in $E$.
 The symbol `$\equiv$' stands for syntactic equality, while `$\ass$' stands for definitional equality. 

\begin{lemma}[Substitution Lemma]\label{lemma:subst-Lemma}
For $E\in\KTer{e}$, $\pi,\varpi\in\KTer{s}$, $\alpha\not\in\FV(\varpi)$ and $\ga \not\equiv \gb$ we have\\
 $E\sub{\pi}{\ga}\sub{\varpi}{\gb} \equiv E\sub{\varpi}{\gb}\sub{\pi\sub{\varpi}{\gb}}{\ga}$.
\end{lemma}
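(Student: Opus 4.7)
The plan is to proceed by structural induction on the expression $E$, carried out simultaneously over all three syntactic categories (stacks, terms, and processes), since they are mutually defined via the grammar. The statement to prove is symmetric under the usual substitution identities, so the main work lies in verifying the base cases and handling the $\bd$-binder correctly.

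First I would dispatch the base cases involving stack variables. If $E\equiv\ga$, then the left-hand side collapses to $\pi\sub{\varpi}{\gb}$, while the right-hand side becomes $\ga\sub{\varpi}{\gb}\sub{\pi\sub{\varpi}{\gb}}{\ga}=\ga\sub{\pi\sub{\varpi}{\gb}}{\ga}=\pi\sub{\varpi}{\gb}$, using that $\ga\not\equiv\gb$. If $E\equiv\gb$, the left side is $\gb\sub{\varpi}{\gb}=\varpi$, and the right side is $\varpi\sub{\pi\sub{\varpi}{\gb}}{\ga}$, which equals $\varpi$ thanks to the hypothesis $\ga\notin\FV(\varpi)$. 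If $E\equiv\gc$ is a third variable distinct from both $\ga$ and $\gb$, both sides equal $\gc$. The case $E\equiv\nil$ is immediate because $\nil$ has no free variables.

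Next I would handle the compound cases by unfolding the definition of substitution, which commutes with each constructor, and then applying the induction hypothesis to the immediate subexpressions. Concretely: for $E\equiv M\at\pi'$ both substitutions distribute over $\at$, giving $(M\sub{\pi}{\ga}\sub{\varpi}{\gb})\at(\pi'\sub{\pi}{\ga}\sub{\varpi}{\gb})$, on which two invocations of the induction hypothesis (one for the term $M$, one for the stack $\pi'$) close the case; the cases $E\equiv\cdr{\pi'}$, $E\equiv\car{\pi'}$, and $E\equiv M\ap\pi'$ are identical in spirit.

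The delicate case is $E\equiv\bd\gc.P$. Here I would first invoke the standard $\alpha$-conversion convention to assume that the bound name $\gc$ differs from both $\ga$ and $\gb$ and does not occur free in $\pi$ or in $\varpi$; since $\bd\gc.P$ is identified with its $\alpha$-variants, this is without loss of generality. Under this renaming neither substitution captures $\gc$, so each substitution passes under the binder and the case reduces to applying the induction hypothesis to the process $P$. This $\alpha$-conversion step is the only real obstacle: it is where the side condition $\ga\notin\FV(\varpi)$ is genuinely needed, because without it the substitution $\sub{\varpi}{\gb}$ applied to $\pi\sub{\varpi}{\gb}$ on the right-hand side could not be allowed to commute freely with the outer $\sub{\pi\sub{\varpi}{\gb}}{\ga}$. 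Everything else is a routine unfolding of the substitution clauses.
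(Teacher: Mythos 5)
Your structural induction is correct and is exactly the standard argument one would give here; the paper in fact states this lemma without proof, treating it as the routine analogue of the classical substitution lemma for the $\lambda$-calculus, so there is no alternative route to compare against. One small correction to your commentary: the hypothesis $\ga\notin\FV(\varpi)$ is consumed in your base case $E\equiv\gb$, where the right-hand side $\varpi\sub{\pi\sub{\varpi}{\gb}}{\ga}$ must collapse to $\varpi$ — not in the $\alpha$-conversion step of the $\bd\gc.P$ case, which goes through purely by freshness of the bound name.
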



\begin{definition}\label{def:reductionrules-stk}
The reduction rules of the stack calculus are the following ones:
\[
\begin{array}{ll}
(\bd)          & (\bd{\ga}.P)\ap\pi \labelto{\bd} P\sub{\pi}{\ga} \qquad\qquad\qquad\qquad\qquad\qquad\qquad\qquad\qquad \\ 
(\mathsf{car}) & \car{M\at\pi} \labelto{\mathsf{car}} M \\ 
(\mathsf{cdr}) & \cdr{M\at\pi} \labelto{\mathsf{cdr}} \pi
\end{array}
\]
Adding the following rules we obtain the \emph{extensional} stack calculus:
\[
\begin{array}{lll}
(\eo) & \bd{\ga}.M\ap\ga \labelto{\eo} M    & \text{ if } \ga \not\in \FV(M) \qquad\qquad\qquad\qquad\qquad\qquad \\ 
(\et) & \car\pi\at\cdr\pi \labelto{\et} \pi
\end{array}
\]
\end{definition}

We simply write $\labelto{\s}$ for the contextual closure of the relation $(\labelto{\bd} \cup \labelto{\mathsf{car}} \cup \labelto{\mathsf{cdr}})$.
 Moreover we write $\labelto{\eta}$ for the contextual closure of the relation $(\labelto{\eo}\cup\labelto{\et})$ and finally we set $\labelto{\s\eta} = (\labelto{\s} \cup \labelto{\eta})$. 
For example, if $\bI\ass\bd\ga.\car\ga\ap\cdr\ga$, then $\bI\ap\bI\at\nil \labelto{\s} \bI\ap\nil \labelto{\s} \car\nil\ap\cdr\nil$ and the reduction 
 does not proceed further. If $\go \ass \bd \ga.\car{\ga}\ap\ga$, then $\go\ap\go\at\nil \labelto{\s} \go\ap\go\at\nil$; this
 is an example of a non-normalizing process. The stack calculus enjoys confluence, even in its extensional version, as the following theorems state.

\begin{theorem}\label{thm:CR-bd}
The $\labelto{\s}$-reduction is Church-Rosser.
\end{theorem}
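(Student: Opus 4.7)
The plan is to use the standard Tait--Martin-L\"of method of parallel reductions, adapted to the three-sorted syntax (terms, stacks, processes) of the stack calculus.

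First I would introduce a parallel reduction relation $\Rightarrow$ on expressions, defined by simultaneous induction over terms, stacks, and processes, using the following schemes: reflexivity ($E\Rightarrow E$); compatible propagation through every constructor ($\bd\ga.P \Rightarrow \bd\ga.P'$ when $P\Rightarrow P'$, and similarly for $\at$, $\car$, $\cdr$, and $\ap$); plus three root-contraction rules mirroring Definition~\ref{def:reductionrules-stk}:
\begin{itemize}
\item if $P\Rightarrow P'$ and $\pi\Rightarrow\pi'$, then $(\bd\ga.P)\ap\pi \;\Rightarrow\; P'\sub{\pi'}{\ga}$;
\item if $M\Rightarrow M'$ and $\pi\Rightarrow\pi'$, then $\car(M\at\pi) \;\Rightarrow\; M'$ and $\cdr(M\at\pi) \;\Rightarrow\; \pi'$.
\end{itemize}
A routine induction gives the sandwich $\labelto{\s}\;\subseteq\;\Rightarrow\;\subseteq\;\mslabelto{\s}$, so that the transitive closures of $\Rightarrow$ and of $\labelto{\s}$ coincide. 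Consequently, confluence of $\labelto{\s}$ reduces to the diamond property for $\Rightarrow$.

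Before attacking the diamond property I would establish a substitution lemma for $\Rightarrow$: if $E\Rightarrow E'$ and $\pi\Rightarrow\pi'$ then $E\sub{\pi}{\ga}\Rightarrow E'\sub{\pi'}{\ga}$. This is proved by induction on the derivation of $E\Rightarrow E'$; Lemma~\ref{lemma:subst-Lemma} is invoked in the $\bd$-contraction case to justify $P'\sub{\pi'}{\ga}\sub{\varpi'}{\gb} \equiv P'\sub{\varpi'}{\gb}\sub{\pi'\sub{\varpi'}{\gb}}{\ga}$, so that the induction hypothesis applies. The compatibility and reflexivity cases are immediate.

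The diamond property itself, i.e., if $E\Rightarrow E_1$ and $E\Rightarrow E_2$ then there is $E'$ with $E_1,E_2\Rightarrow E'$, is then proved by induction on the structure of $E$, examining in each case the last rule applied on each side. The interesting cases are exactly the overlaps with a contraction rule: (i) one side contracts $(\bd\ga.P)\ap\pi$ by the $\bd$-rule producing $P_1\sub{\pi_1}{\ga}$, while the other propagates inside producing $(\bd\ga.P_2)\ap\pi_2$; here the induction hypothesis yields common reducts $P'$ of $P_1,P_2$ and $\pi'$ of $\pi_1,\pi_2$, and one closes the diamond with $P'\sub{\pi'}{\ga}$, using the substitution lemma to reduce $P_1\sub{\pi_1}{\ga}$ to $P'\sub{\pi'}{\ga}$. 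Cases (ii) and (iii), where $\car(M\at\pi)$ or $\cdr(M\at\pi)$ is contracted on one side and propagated on the other, are analogous but do not require the substitution lemma, since the contracted result is a direct subexpression. All remaining cases are compatible: both sides act at disjoint positions, and the induction hypothesis closes the diamond componentwise.

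The main obstacle is organizing the case analysis in step~(i): one must be careful that the inductive hypothesis is available on $P$ and on $\pi$ as proper subexpressions of $(\bd\ga.P)\ap\pi$, and that the substitution lemma is applied with the correct variable conventions (Barendregt's convention suffices). Once the diamond property for $\Rightarrow$ is in place, Church--Rosser of $\labelto{\s}$ follows from the classical lemma that the transitive closure of a relation with the diamond property is confluent.
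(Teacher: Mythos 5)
The paper states Theorem~\ref{thm:CR-bd} without giving any proof in the body, so there is nothing to compare against line by line; judged on its own merits, your Tait--Martin-L\"of argument is correct and is the standard way to establish this result. The three contraction rules have non-overlapping, left-linear left-hand sides and the sorting of the syntax (processes, terms, stacks) rules out any critical pair, so your parallel reduction does satisfy the diamond property, and the sandwich argument plus the substitution lemma close the proof as you describe. Two small points worth tightening: in the substitution lemma the reflexivity case $E\Rightarrow E$ requires $E\sub{\pi}{\ga}\Rightarrow E\sub{\pi'}{\ga}$, which is cleanest if you take reflexivity as an axiom only at variables and $\nil$ and derive it for compound expressions from the congruence rules; and the diamond property is more naturally proved by induction on (one of) the derivations of $E\Rightarrow E_1$ rather than on the structure of $E$, though for this calculus the two amount to the same case analysis. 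Note also that an even shorter route is available: the non-extensional system is an orthogonal rewriting system, so confluence follows from the general theorem on orthogonal systems --- the paper's remark that Theorem~\ref{thm:CR-bd-eta} holds \emph{despite} the non-left-linearity of the $\eta$-rules suggests this is exactly the dichotomy the authors had in mind, with the parallel-reduction machinery really only needed for the extensional case.
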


\begin{theorem}\label{thm:CR-bd-eta}
The $\labelto{\s\eta}$-reduction is Church-Rosser.
\end{theorem}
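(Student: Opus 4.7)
My plan is to invoke the Hindley-Rosen lemma: if two reduction relations are Church-Rosser and commute, then their union is Church-Rosser. Since Theorem \ref{thm:CR-bd} already supplies the Church-Rosser property for $\labelto{\s}$, this strategy requires two additional ingredients: (a) that $\labelto{\eta}$ itself is Church-Rosser, and (b) that $\mslabelto{\s}$ and $\mslabelto{\eta}$ commute.

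For (a), I would first observe that $\labelto{\eta}$ is strongly normalizing: each application of $\eo$ removes one $\bd$-abstraction and one stack constructor, and each application of $\et$ removes one stack constructor along with one $\mathsf{car}$ and one $\mathsf{cdr}$ occurrence, so the total number of symbol occurrences strictly decreases. By Newman's lemma it then suffices to establish local confluence, which reduces to a case analysis of critical pairs. The possible overlaps are two $\eo$-redexes in which one is nested directly inside the other, two $\et$-redexes in which the inner stack $\pi$ is itself of shape $\car\varpi\at\cdr\varpi$, and $\eo$/$\et$ pairs which can only occur at disjoint positions (since the two patterns belong to different syntactic categories, a term and a stack respectively). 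In each case the two one-step reducts either coincide or can be joined by at most two further $\eta$-steps.

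For (b), I would prove the stronger strong-commutation property: if $E \labelto{\eta} E_1$ and $E \labelto{\s} E_2$, then there exists $E'$ with $E_1 \mslabelto{\s} E'$ and $E_2 \mslabelto{\eta} E'$; since strong commutation of two relations implies commutation of their reflexive-transitive closures, this is enough. The argument proceeds by cases on the relative positions of the contracted redexes. Disjoint occurrences trivially commute. Occurrences in which the $\eta$-redex is properly inside the $\s$-redex (or vice versa) are handled by performing the two steps in the opposite order, using the Substitution Lemma (Lemma \ref{lemma:subst-Lemma}) to push substitutions through when needed. The borderline overlap is the expression $(\bd\ga.M\ap\ga)\ap\pi$ with $\ga\notin\FV(M)$, which is simultaneously a $\bd$-redex and contains the $\eo$-redex $\bd\ga.M\ap\ga$; both orders of reduction lead directly to $M\ap\pi$.

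The main obstacle is the interaction between $\labelto{\bd}$ and $\labelto{\et}$: a single $\bd$-step $(\bd\ga.P)\ap\pi \labelto{\bd} P\sub{\pi}{\ga}$ can create brand-new $\et$-redexes, because wherever $P$ contains the pattern $\car\ga\at\cdr\ga$ the substitution produces the $\et$-redex $\car\pi\at\cdr\pi$. Dually, a single $\et$-step $\car\ga\at\cdr\ga \labelto{\et} \ga$ performed inside $P$ becomes vacuous after that $\bd$-step, since substitution collapses both sides of the reduction to $\pi$. Hence an $\eta$-step taken before the $\bd$-step must in general be matched by as many $\eta$-steps after it as there are free occurrences of $\ga$ in $P$, which is precisely why the rectangle must be stated as $\labelto{\eta}/\labelto{\s}$ versus $\mslabelto{\s}/\mslabelto{\eta}$ rather than a one-step diamond. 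Once this case analysis is complete, Hindley-Rosen yields the Church-Rosser property for $\labelto{\s\eta} = \labelto{\s}\cup\labelto{\eta}$.
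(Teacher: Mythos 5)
Your overall strategy breaks down at step (b): the commutation of $\mslabelto{\s}$ and $\mslabelto{\eta}$ that Hindley--Rosen requires is simply false for this calculus, because the rule $(\et)$ is not left-linear. The lethal interaction is not the one you analyse (creation of $\et$-redexes by a $\bd$-substitution, which is indeed benign) but the \emph{destruction} of an $\et$-redex by an $\s$-step performed inside one of the two copies of the duplicated metavariable. Concretely, take any stack $\pi$ with $\pi \labelto{\s} \pi'$ such that $\pi$ and $\pi'$ are distinct $\eta$-normal forms, e.g.\ $\pi \equiv (\bd\gb.(\bd\ga.\car\ga\ap\cdr\ga)\ap\nil)\at\nil$ and $\pi' \equiv (\bd\gb.\car\nil\ap\cdr\nil)\at\nil$. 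Then $\car\pi\at\cdr\pi \labelto{\et} \pi$ and $\car\pi\at\cdr\pi \labelto{\s} \car{\pi'}\at\cdr{\pi}$. The stack $\car{\pi'}\at\cdr{\pi}$ is $\eta$-normal (the top level is no longer an $\et$-redex since $\pi'\not\equiv\pi$, and neither component contains an $\eta$-redex), so its only $\mslabelto{\eta}$-reduct is itself; on the other hand every $\mslabelto{\s}$-reduct of $\pi$ has outermost shape $(\bd\gb.\,\cdot\,)\at\nil$ and hence can never equal $\car{\pi'}\at\cdr{\pi}$. Thus not only does your proposed strong commutation fail, but so does commutation of the reflexive--transitive closures, so Hindley--Rosen is inapplicable to the decomposition $\labelto{\s\eta}=\labelto{\s}\cup\labelto{\eta}$. (The two reducts \emph{are} joinable in the union, via $\car{\pi'}\at\cdr{\pi}\labelto{\s}\car{\pi'}\at\cdr{\pi'}\labelto{\et}\pi'$ and $\pi\labelto{\s}\pi'$, which is why the theorem is not refuted --- but that repair uses an $\s$-step on the side where your diagram only permits $\eta$-steps.)

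This is precisely the subtlety the paper flags in the remark immediately following the theorem: the combination of a non-left-linear extensionality rule with projection rules is exactly what destroys confluence in the $\lambda$-calculus with surjective pairing, so any correct proof must engage with this interaction rather than factor it away. A workable argument has to treat the two relations together --- for instance via a suitably generalised parallel reduction in which $\car{\pi_1}\at\cdr{\pi_2}$ may be contracted whenever $\pi_1$ and $\pi_2$ have a common (parallel) reduct --- rather than by separate confluence of $\labelto{\s}$ and $\labelto{\eta}$ plus commutation. Your part (a) (strong normalisation and local confluence of $\labelto{\eta}$ alone) is fine, but it cannot be combined with Theorem \ref{thm:CR-bd} in the way you propose.
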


We observe that Theorem \ref{thm:CR-bd-eta} holds despite the non left-linearity of the reduction rules of the extensional stack calculus.
 In other calculi, like the $\lambda$-calculus with surjective pairing, the interaction of the extensionality rule with the projection rules breaks the
 Church-Rosser property for the calculus \cite{Klop89}.

\subsection{Translation of lambda-mu-calculus}\label{subsec:tran-lmu}


Many calculi have been introduced so far to extend the Curry\textendash Howard correspondence to classical logic \cite{Griffin90,Parigot92,Groote95,UrbanThese,Herbelin00}.
 Since we cannot attempt to report a comparison with the stack calculus for each one of them, so we choose probably the best known, i.e. Parigot's
 \lmu-calculus. In this section we show how \lmu-calculus can be faithfully encoded into the stack calculus (in the precise sense of the forthcoming
 Theorem \ref{thm:preserv-red-lmu}).

The basis for the definition of the \lmu-calculus language are two (disjoint) sets $\lambda\Var$ and $\mu\Var$ of $\lambda$-variables and $\mu$-variables (a.k.a. \emph{names}), respectively. The names, ranged over by $\ga,\gb,\gc,\ldots$, are are taken from $\mu\Var$ and the usual variables, taken in $\lambda\Var$, are ranged over by $x,y,z,\ldots$.
 The expressions belonging to the language of \lmu-calculus are often divided into two categories, \emph{terms} and \emph{named terms}, produced by the following grammar:
\[
\begin{array}{llll}
s,t & \gramm & x \mid \lambda x.t \mid st \mid \mu \ga.p & \text{terms} \qquad\qquad\qquad\qquad\qquad \\
p,q & \gramm & [\alpha]t                                 & \text{named terms}
\end{array}
\]
We use letters $e,e'$ to range over \emph{expressions} which are either terms or named terms. 
We denote by $\LTer{t}$, $\LTer{p}$, and $\LTer{e}$ the sets of all terms, named terms and expressions, respectively.

We briefly recall the operational semantics of \lmu-calculus. In addition to the usual capture-free substitution $e\sub{t}{x}$ of a term $t$ for a variable $x$ in $e$, \lmu-calculus uses the \emph{renaming} $e\sub{\gb}{\ga}$ of $\ga$ with $\gb$ in $e$ and the \emph{structural substitution} $e\ssub{s}{\ga}$ that replaces all named subterms $[\ga]t$ of $e$ with the named term $[\ga]ts$: for example $(\lambda y.\mu\gb.[\ga]z)\ssub{\lambda x.x}{\ga} \equiv \lambda y.\mu\gb.[\ga]z(\lambda x.x)$ (see \cite{Parigot92}).
 Note that we adopt here the notations of David and Py \cite{David01} instead of Parigot's original ones.
 The reduction relation characterizing the \lmu-calculus is given by the contextual closure of the following rewrite rules:
\[
\begin{array}{llllll}
(\beta) & (\lambda x.t)s \labelto{\beta} t\sub{s}{x}              & \text{ logical reduction }
 &
(\rho)  & [\beta](\mu \ga.p) \labelto{\rho} p\sub{\beta}{\ga}     & \text{ renaming } \\
(\mu)   & (\mu \alpha.p)s \labelto{\mu} \mu \alpha.p\ssub{s}{\ga} & \text{ structural reduction }
 & 
(\theta)& \mu \alpha.[\alpha]t \labelto{\theta} t               & \text{ if } \alpha \not\in \FN(t)
\end{array}
\]
The reduction $\labelto{\beta\mu\rho\theta}$ was proved to enjoy the Church-Rosser property by Parigot \cite{Parigot92}. The extensional \lmu-calculus is obtained by adding the contextual closure of the following reduction rules:
\[
\begin{array}{lll}
(\eta) & \lambda x.tx \labelto{\eta} t                                & \text{ if } x \not\in \FV(t) \qquad\qquad\qquad\qquad\qquad\qquad \\
(\nu)  & \mu\alpha.p \labelto{\nu} \lambda x.\mu\alpha.p\ssub{x}{\ga} & \text{ if } x \not\in \FV(p)
\end{array}
\]
We are now going to translate \lmu-expressions into expressions of the stack calculus (stack-expressions, for short).
 A minor technical detail for the translation is the need of regarding all $\lambda$-variables and all names as stack variables.

\begin{definition}\label{def:translation-stk-lmu}
Define a mapping $\Te{(\cdot)}: \LTer{e} \to \KTer{e}$ by induction as follows:
\[
\begin{array}{ll}
\Te{x} = \bd\gb.\car{x}\ap\gb                  &                                          \\ 
\Te{(\lambda x.t)} = \bd{x}.\Te{t} \ap \cdr{x} &                                          \\
\Te{(ts)} = \bd\gb.\Te{t}\ap \Te{s}\at\gb      & \gb \not\in \FV(\Te{t}) \cup \FV(\Te{s}) \qquad\qquad\qquad\qquad\qquad\qquad\qquad\qquad\qquad \\
\Te{([\alpha]t)} = \Te{t} \ap \alpha           &                                          \\
\Te{(\mu \alpha.p)} = \bd{\alpha}.\Te{p}       &
\end{array}
\]
\end{definition}

The translation of Definition \ref{def:translation-stk-lmu} preserves the convertibility of expressions and in this sense provides an embedding of \lmu-calculus
 into the stack calculus.

\begin{theorem}\label{thm:preserv-red-lmu}
Let $e,e' \in \LTer{e}$. 
\begin{enumerate}[(i)]
\item If $e \labelto{\beta\mu\rho\theta} e'$, then $\Te{e}$ and $\Te{(e')}$ have a common reduct in the stack calculus.
\item If $e \labelto{\beta\mu\rho\theta\eta\nu} e'$, then $\Te{e}$ and $\Te{(e')}$ have a common reduct in the extensional stack calculus.
\end{enumerate}
\end{theorem}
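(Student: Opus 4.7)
The plan is to induct on the derivation of $e \labelto{\beta\mu\rho\theta} e'$ (resp. with $\eta\nu$) simultaneously for terms and named terms. The contextual closure cases follow by the induction hypothesis together with the confluence Theorems~\ref{thm:CR-bd} and~\ref{thm:CR-bd-eta}, using that the translation is compositional. The heart of the proof lies in the root reduction cases, and these reduce to three substitution lemmas:
\begin{itemize}
\item (Renaming) $\Te{(e\sub{\gb}{\ga})} \equiv \Te{e}\sub{\gb}{\ga}$, immediate because $\mu$-variables are treated as stack variables by the translation;
\item (Term substitution) for fresh $\gb$, $\Te{(t\sub{s}{x})}$ and $\Te{t}\sub{\Te{s}\at\gb}{x}$ share a $\labelto{\s}$-reduct;
\item (Structural substitution) $\Te{(e\ssub{s}{\ga})}$ and $\Te{e}\sub{\Te{s}\at\ga}{\ga}$ share a $\labelto{\s}$-reduct.
\end{itemize}
Both nontrivial lemmas I would prove by induction on $e$. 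The critical observation enabling everything is that, by a routine check of the five clauses of Definition~\ref{def:translation-stk-lmu}, the translation of every $\lambda\mu$-term is a $\bd$-abstraction, i.e.\ of the form $\bd\gc.P$; so a redex of the form $\bd\gb.\Te{t}\ap\gb$ always reduces in one $\bd$-step to an $\alpha$-variant of $\Te{t}$, which is how apparent $\eta$-needs in the non-extensional setting are discharged.

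With these in place, each root rule becomes a direct computation. For $(\beta)$: $\Te{((\lambda x.t)s)} = \bd\gb.(\bd x.\Te{t}\ap\cdr x)\ap\Te{s}\at\gb \labelto{\bd}\labelto{\mathsf{cdr}} \bd\gb.\Te{t}\sub{\Te{s}\at\gb}{x}\ap\gb$, which meets $\Te{(t\sub{s}{x})}$ by the term substitution lemma and one further $\bd$-step (since $\Te{(t\sub{s}{x})}$ is itself a $\bd$-abstraction). For $(\mu)$: $\Te{((\mu\ga.p)s)} \labelto{\bd} \bd\gb.\Te{p}\sub{\Te{s}\at\gb}{\ga}$, which meets $\bd\ga.\Te{(p\ssub{s}{\ga})}$ by the structural substitution lemma together with $\alpha$-renaming. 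For $(\rho)$: $\Te{[\gb](\mu\ga.p)} = (\bd\ga.\Te{p})\ap\gb \labelto{\bd} \Te{p}\sub{\gb}{\ga} \equiv \Te{(p\sub{\gb}{\ga})}$ by renaming. For $(\theta)$: $\Te{(\mu\ga.[\ga]t)} = \bd\ga.\Te{t}\ap\ga$, and using $\Te{t} = \bd\gc.P$ with $\ga \notin \FV(\Te{t})$ (a straightforward free-variable lemma for the translation), this $\bd$-reduces to an $\alpha$-variant of $\Te{t}$.

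For part (ii), the extensional cases use the additional rules $\eo,\et$. For $(\eta)$: $\Te{(\lambda x.tx)} = \bd x.(\bd\gb.\Te{t}\ap\Te{x}\at\gb)\ap\cdr x \labelto{\bd} \bd x.\Te{t}\ap(\bd\gc.\car x\ap\gc)\at\cdr x \labelto{\eo} \bd x.\Te{t}\ap\car x\at\cdr x \labelto{\et} \bd x.\Te{t}\ap x \labelto{\eo}\Te{t}$, the last step using $x \notin \FV(\Te{t})$. The $(\nu)$ case is similar: I would reduce $\Te{(\lambda x.\mu\ga.p\ssub{x}{\ga})}$ using the structural substitution lemma together with $\bd$- and $\eta$-steps to a reduct of $\Te{(\mu\ga.p)} = \bd\ga.\Te{p}$.

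The main obstacle I anticipate is the precise formulation and bookkeeping in the structural substitution lemma: one must track that the substitution $\ssub{s}{\ga}$ in the source language threads the term $s$ through every named subexpression carrying $\ga$, while in the target the single stack substitution $\sub{\Te{s}\at\ga}{\ga}$ does the corresponding work only after a $\bd$-redex is exposed at each $[\ga]u$-site. Getting the freshness and $\alpha$-renaming conventions straight (particularly for the $\gb$'s introduced by the application clause) is the only delicate point; once this is done, the diagrams close by purely mechanical computations, justified globally by the confluence theorems cited above.
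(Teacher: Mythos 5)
The paper states Theorem~\ref{thm:preserv-red-lmu} without giving a proof in the source provided, so there is nothing to compare against directly; judged on its own merits, your argument is correct and is exactly the proof one would expect here. You correctly isolate the three substitution lemmas (renaming is literal, term substitution and structural substitution hold up to a common $\labelto{\s}$-reduct, the latter via the stack $\Te{s}\at\ga$), and the key observation that $\Te{t}$ is always a $\bd$-abstraction is indeed what lets the $(\theta)$, $(\rho)$ and variable-substitution cases close with a $\bd$-step instead of an $\eo$-step, which is essential for part (i) to land in the non-extensional calculus. Two small remarks: the contextual-closure cases need only compositionality of the translation (reduction in the stack calculus is contextually closed, so a common reduct of $\Te{d}$ and $\Te{(d')}$ lifts through the translated context); invoking Theorems~\ref{thm:CR-bd} and~\ref{thm:CR-bd-eta} there is harmless but unnecessary. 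Also, in the $(\beta)$ case one should note that the fresh $\gb$ cannot occur in the common reduct $N$ (since $N$ is also a reduct of $\Te{(t\sub{s}{x})}$ and free variables do not increase under reduction), so the final $\bd$-step on $\bd\gb.N\ap\gb$ really does yield an $\alpha$-variant of $N$; you implicitly rely on this and it does hold.
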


Note that the extensional \lmu-calculus does not enjoy a full Church-Rosser theorem, as witnessed by the following counterexample \cite{David01}:
 $[\gamma]y \mslabelot{\eta\rho} [\beta]\lambda x.(\mu\ga.[\gamma]y)x \labelto{\mu} [\beta]\lambda x.\mu\ga.[\gamma]y$.

However these kinds of situations do not arise in the stack calculus (by Theorem \ref{thm:CR-bd-eta}): in this case for example we have
 $\Te{([\gamma]y)} \mslabelto{\s} \car{y}\ap\gamma \mslabelot{\s\eta} \Te{([\beta]\lambda x.\mu\ga.[\gamma]y)}$.

For example $\Te{(\lambda x.x)} = \bd x.\car{x}\ap\cdr{x}$ and
 $\Te{(\mathsf{call}/\mathsf{cc})} = \mu\ga.\car{\ga}\ap(\mu\gb.\car{\gb}\ap\cdr{\ga})\at\cdr{\ga}$,
 where $\mathsf{call}/\mathsf{cc} \equiv \lambda f.\mu\ga.[\ga](f(\lambda x.\mu\delta.[\ga]x))$.


\section{The typed stack calculus}\label{sec:typed-stack}

We are now going to look at the stack calculus in the light of the Curry\textendash Howard isomorphism. Since the stack calculus can encode calculi with control
 features (such as \lmu-calculus), it can be given a deductive system of full classical implicational propositional logic ($\{\to,\bot\}$-fragment).

The type system has judgements that come in three forms: $\tystk{s}{\pi}{A}{\Delta}$, $\tystk{t}{M}{A}{\Delta}$, and $\tystk{p}{P}{}{\Delta}$, where
 as usual greek capital letters $\Delta,\Delta'$ are used to denote \emph{contexts}, that is sets of
 assumptions $\{\ga_1\asm A_1,\ldots,\ga_n\asm A_n\}$ (also abbreviated by $\seq \ga\asm\seq A$). In a judgement like $\tystk{t}{M}{A}{\Delta}$, the 
 vertical bar separates the context $\Delta$ from the \emph{active formula} $A$; Theorem \ref{thm:lmu-tranlsation-type} can sharpen its role
 via a comparison with judgements in typed $\lambda\mu$-calculus.

\noindent
\begin{tabular}{|cccc|}
\hline
 & & & \\
\begin{prooftree}
\tystk{t}{M}{A}{\Delta} \qquad \tystk{s}{\pi}{B}{\Delta}
\justifies
\tystk{s}{M\at\pi}{A \to B}{\Delta}
\thickness=0.08em
\using{[\atrule]}
\end{prooftree}
 &
\begin{prooftree}
{\ga}\asm{A} \in \Delta
\justifies
\tystk{s}{\ga}{A}{\Delta}
\thickness=0.08em
\using{[\varrule]}
\end{prooftree}
 & 
\begin{prooftree}
\tystk{s}{\pi}{A \to B}{\Delta}
\justifies
\tystk{s}{\cdr\pi}{B}{\Delta}
\thickness=0.08em
\using{[\cdrrule]}
\end{prooftree}
 &
 \\ & & & \\
\begin{prooftree}
\tystk{s}{\pi}{A\to B}{\Delta}
\justifies
\tystk{t}{\car\pi}{A}{\Delta}
\thickness=0.08em
\using{[\carrule]}
\end{prooftree}
 &
\begin{prooftree}
\tystk{p}{P}{}{\Delta,\ga\asm A}
\justifies
\tystk{t}{\bd \ga.P}{A}{\Delta}
\thickness=0.08em
\using{[\bdrule\ga]}
\end{prooftree}
 &
\begin{prooftree}
\tystk{t}{M}{A}{\Delta} \qquad \tystk{s}{\pi}{A}{\Delta}
\justifies
\tystk{p}{M \ap \pi}{}{\Delta}
\thickness=0.08em
\using{[\aprule]}
\end{prooftree}
 &
\\
& & & \\
\begin{prooftree}
\ 
\justifies
\tystk{s}{\nil}{\bot}{\Delta}
\thickness=0.08em
\using{[\nilrule]}
\end{prooftree}
 & & & \\
 & & & \\
\hline
\end{tabular}\\

\textbf{Fig 2:} Typed stack calculus - propositional $\{\to,\bot\}$-fragment.\\


The choice for the forms of the judgements is justified by the forthcoming Theorem \ref{thm:lmu-tranlsation-type}, where it will appear that the role
 of contexts is analogous to that of \emph{name contexts} (i.e. right contexts) in typed \lmu-calculus (see Figure 3).

It is very well-known that by restricting Gentzen's sequent calculus $\mathsf{LK}$ \cite{Gentzen35} to manage at most one formula on the right-hand side of sequents one gets the
 intuitionistic sequent calculus. On the other hand, the symmetric restriction (which, by symmetry, is well behaved with respect to cut elimination)
 is not so popular. One can find an explicit study of the induced system in Czermak \cite{Cze77}. In \cite{Laurent11} Laurent studies a slight 
 variation of Czermak's system, that he calls $\mathsf{LD}_0$, and explores the logical duality between $\mathsf{LD}_0$ and its symmetrical
 calculus $\mathsf{LJ}_0$. The existence of these two symmetrical (and equivalent, via duality) systems has its roots in the dual 
 ``decomposition'' of $\mathsf{LK}$ into Danos et. al's \cite{DanosJS95} $\mathsf{LKQ}$ and $\mathsf{LKT}$ systems, corresponding to call-by-value
 and call-by-name evaluation of classical proofs, respectively. Both systems are as powerful as $\mathsf{LK}$, and $\mathsf{LKT}$ can be encoded
 into $\mathsf{LD}_0$, in which the \emph{stoup} disappears, since there is at most one formula on the left-hand side of sequents. There is a close
 relationship between $\mathsf{LD}_0$, $\mathsf{LKT}$ and the stack calculus, but indeed while the first two are formulated
 as a sequent calculus (i.e., with introduction rules only) the latter has elimination rules. One can translate both $\mathsf{LKT}$ and $\mathsf{LD}_0$
 into the stack calculus (and viceversa), somewhat as Gentzen's $\mathsf{LK}$ can be translated into Prawitz's natural decuction \cite{Prawitz65}
 (and viceversa) but the translations are not mere inclusions.


The judgements in stack calculus have the following intuitive logical interpretation, in terms of the classical (boolean) notion of semantic entailment
 ``$\vDash$". For those of the form $\tystk{s}{\pi}{A}{\gb_1\asm B_1,\ldots,\gb_n\asm B_n}$, read ``$\neg B_1,\ldots,\neg B_n \vDash \neg A$"; for
 those of the form $\tystk{t}{M}{A}{\gb_1\asm B_1,\ldots,\gb_n\asm B_n}$, read ``$\neg B_1,\ldots,\neg B_n \vDash A$"; for those of the form
 $\tystk{p}{P}{}{\gb_1\asm B_1,\ldots,\gb_n\asm B_n}$, read ``$\neg B_1,\ldots,\neg B_n \vDash \bot$". The above indications will be restated and proved
 precisely in Theorem \ref{thm:soundness}.

We now show that the reduction rules specified in Section \ref{sec:untyped_stack} are indeed reduction rules for the proofs of the typed system.

\begin{lemma}[Typed substitution lemma]\label{lem:typed-subs}
Suppose $\tystk{s}{\pi}{B}{\Delta}$.
\begin{itemize}
 \item[(i)] If $\tystk{s}{\varpi}{A}{\gb \asm B,\Delta}$, then $\tystk{s}{\varpi\sub{\pi}{\gb}}{A}{\Delta}$
 \item[(ii)] if $\tystk{t}{M}{A}{\gb \asm B,\Delta}$, then $\tystk{t}{M\sub{\pi}{\gb}}{A}{\Delta}$
 \item[(iii)] if $\tystk{p}{P}{}{\gb \asm B,\Delta}$, then $\tystk{p}{P\sub{\pi}{\gb}}{}{\Delta}$.
\end{itemize}
\end{lemma}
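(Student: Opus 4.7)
The plan is to prove the three statements (i), (ii), (iii) simultaneously by induction on the derivation of the typing judgement for $\varpi$, $M$, and $P$ respectively. Because each expression constructor of the stack calculus is associated with a unique typing rule, this is the same as induction on the structure of $\varpi$, $M$, and $P$. In every case the substitution commutes with the constructor (up to $\alpha$-conversion for the binder), so the conclusion will follow by applying the induction hypotheses to the immediate subexpressions and then reapplying the same typing rule.

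For statement (i) I would split on the last rule used to type $\varpi$. In the $[\varrule]$ case, $\varpi \equiv \gc$ for some stack variable; if $\gc \equiv \gb$ then $A \equiv B$ and $\varpi\sub{\pi}{\gb} \equiv \pi$, and the hypothesis $\tystk{s}{\pi}{B}{\Delta}$ concludes; otherwise $\gc\asm A \in \Delta$ and the substitution is the identity, so $[\varrule]$ applies again. The $[\nilrule]$ case is trivial since $\nil$ has no free variables. The $[\atrule]$ case decomposes $\varpi$ as $M'\at\varpi'$ and uses the induction hypotheses (ii) and (i) on the two premises before reapplying $[\atrule]$; the $[\cdrrule]$ case uses (i) on its single premise.

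For statement (ii) the $[\carrule]$ case reduces to (i) on the stack premise. The genuinely delicate case is $[\bdrule{\gc}]$: here $M \equiv \bd\gc.P$ and the premise is $\tystk{p}{P}{}{\Delta, \gb\asm B, \gc\asm A}$. By the Substitution Lemma (Lemma \ref{lemma:subst-Lemma}) together with $\alpha$-conversion, we may assume $\gc \not\equiv \gb$ and $\gc \notin \FV(\pi)$, so that $(\bd\gc.P)\sub{\pi}{\gb} \equiv \bd\gc.P\sub{\pi}{\gb}$. To apply the induction hypothesis (iii) we need $\tystk{s}{\pi}{B}{\Delta,\gc\asm A}$, which follows from the hypothesis $\tystk{s}{\pi}{B}{\Delta}$ by a straightforward weakening lemma for the type system (a simple induction on derivations, since contexts are treated as sets of assumptions). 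The induction hypothesis then yields $\tystk{p}{P\sub{\pi}{\gb}}{}{\Delta,\gc\asm A}$, and reapplying $[\bdrule{\gc}]$ gives $\tystk{t}{\bd\gc.P\sub{\pi}{\gb}}{A}{\Delta}$, as required. For statement (iii) the only rule is $[\aprule]$, and it follows immediately from applying (ii) to the term premise and (i) to the stack premise.

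The main obstacle is the $\bd$-abstraction case in (ii): it is the only point that requires both a capture-avoidance argument and a weakening step for the auxiliary hypothesis $\tystk{s}{\pi}{B}{\Delta}$. Every other case is essentially a mechanical dispatch of the induction hypothesis followed by reapplication of the typing rule, which is why the three statements must be proved by a single simultaneous induction rather than in sequence.
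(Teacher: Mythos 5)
Your proof is correct and is exactly the standard argument the paper relies on (the paper in fact omits the proof of this lemma entirely, so there is nothing to diverge from): a simultaneous induction on the three typing derivations, with the only non-mechanical points being capture-avoidance and a weakening step in the $[\bdrule{\gc}]$ case, both of which you handle properly. One small remark: the appeal to Lemma \ref{lemma:subst-Lemma} in that case is superfluous --- pushing the substitution under the binder once $\gc$ is renamed fresh is just the definition of capture-avoiding substitution plus $\alpha$-conversion, and the commutation-of-substitutions lemma plays no role there.
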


Using Lemma \ref{lem:typed-subs}, we can prove that the reduction of a typed term preserves the type.

\begin{theorem}\label{thm:subj-red-stk}
For all $\pi,\pi' \in \KTer{s}$, all $P,P' \in \KTer{p}$ and $M,M' \in \KTer{t}$ we have that
\begin{itemize}
 \item[(i)] if $\tystk{p}{P}{}{\Delta}$ and $P \labelto{\s\eta} P'$, then $\tystk{p}{P'}{}{\Delta}$
 \item[(ii)] if $\tystk{s}{\pi}{A}{\Delta}$ and $\pi \labelto{\s\eta} \pi'$, then $\tystk{s}{\pi'}{A}{\Delta}$
 \item[(iii)] if $\tystk{t}{M}{A}{\Delta}$ and $M \labelto{\s\eta} M'$, then $\tystk{t}{M'}{A}{\Delta}$.
\end{itemize}
\end{theorem}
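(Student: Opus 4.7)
The plan is to prove all three items simultaneously by induction on the derivation of the one-step reduction $\labelto{\s\eta}$; the items must be proved together because contextual closure mixes the three syntactic categories (reducing inside a term changes an enclosing stack, and inside a stack it changes an enclosing process). The main technical tool is inversion of typing: in the system of Fig.~2 each syntactic constructor is introduced by exactly one rule, so a typing of a compound expression decomposes uniquely into typings of its immediate sub-expressions.

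For the base cases, I go through the five reduction rules. For $(\bd)$, inverting the application rule $[\aprule]$ and the binder rule $[\bdrule{\ga}]$ on $\tystk{p}{(\bd\ga.P)\ap\pi}{}{\Delta}$ yields $\tystk{p}{P}{}{\Delta,\ga\asm A}$ and $\tystk{s}{\pi}{A}{\Delta}$ for a common $A$; Lemma \ref{lem:typed-subs}(iii) then produces a typing of the contractum $P\sub{\pi}{\ga}$. The cases (car) and (cdr) are one-step inversions of $[\atrule]$ followed by $[\carrule]$ or $[\cdrrule]$. For $(\eo)$, successive inversions of $[\bdrule{\ga}]$, $[\aprule]$, and $[\varrule]$ force the type of $\ga$ inside $M\ap\ga$ to coincide with the type bound by the outer $\bd$, giving $\tystk{t}{M}{A}{\Delta,\ga\asm A}$; a routine strengthening lemma, provable in parallel with Lemma \ref{lem:typed-subs}, then removes $\ga$ from the context since $\ga\not\in\FV(M)$.

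The main obstacle is the $(\et)$ case. From $\tystk{s}{\car\pi\at\cdr\pi}{A\to B}{\Delta}$, inversion of $[\atrule]$ yields $\tystk{t}{\car\pi}{A}{\Delta}$ and $\tystk{s}{\cdr\pi}{B}{\Delta}$, and further inversions of $[\carrule]$ and $[\cdrrule]$ produce two typing derivations for $\pi$, of shapes $A\to B_1$ and $A_2\to B$ respectively. One then needs a single derivation of $\tystk{s}{\pi}{A\to B}{\Delta}$. This is delicate because types of stacks are not \emph{a priori} unique: a dummy abstraction $\bd\ga.P$ with $\ga\not\in\FV(P)$ can carry an arbitrary type, so the two sub-derivations may make incompatible choices at such positions. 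I would therefore prove a preliminary combination lemma by structural induction on $\pi$: the variable and $\nil$ cases fix the type from $\Delta$; the case $M\at\varpi$ is handled by splicing the derivation of $M$ from the first typing with that of $\varpi$ from the second and reassembling via $[\atrule]$; the case $\cdr\varpi$ requires a deeper splicing inside the codomain of $\varpi$'s type, obtained by a further application of the inductive hypothesis. Once available, the lemma yields the required $\tystk{s}{\pi}{A\to B}{\Delta}$.

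Finally, the contextual closure cases are mechanical: each one-step reduction happens inside a sub-expression, so the appropriate inductive hypothesis among (i)--(iii) produces a typing for that reduct, and the ambient typing rule reassembles the derivation of the whole expression.
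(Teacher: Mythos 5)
Your overall strategy --- simultaneous induction on the derivation of the one-step reduction, inversion of the (syntax-directed) typing rules, Lemma \ref{lem:typed-subs} for the $(\bd)$-redex, a strengthening lemma for $(\eo)$, and mechanical reassembly in the contextual-closure cases --- is the natural one and is evidently what the authors intend: the paper gives no proof of this theorem beyond the remark that it follows ``using Lemma \ref{lem:typed-subs}''. Your treatment of $(\bd)$, $(\mathsf{car})$, $(\mathsf{cdr})$ and $(\eo)$ is correct, and you are right both that $(\et)$ is the only delicate case and that the difficulty there is genuine, since vacuous abstractions make stack types non-unique.

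The one place that needs repair is the statement of your combination lemma. As formulated (from $\tystk{s}{\pi}{A\to B_1}{\Delta}$ and $\tystk{s}{\pi}{A_2\to B}{\Delta}$ infer $\tystk{s}{\pi}{A\to B}{\Delta}$), its own inductive proof does not close in the case $\pi\equiv\cdr\varpi$: inversion there yields $\tystk{s}{\varpi}{C_1\to(A\to B_1)}{\Delta}$ and $\tystk{s}{\varpi}{C_2\to(A_2\to B)}{\Delta}$, and one further application of the depth-one inductive hypothesis only produces $\tystk{s}{\varpi}{C_1\to(A_2\to B)}{\Delta}$, whose codomain is $A_2\to B$ rather than the required $A\to B$. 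You must generalize over the depth at which the two derivations are spliced: for every $n\geq 0$, if $\tystk{s}{\pi}{C_1\to\cdots\to C_n\to A\to B_1}{\Delta}$ and $\tystk{s}{\pi}{D_1\to\cdots\to D_n\to A_2\to B}{\Delta}$, then $\tystk{s}{\pi}{C_1\to\cdots\to C_n\to A\to B}{\Delta}$. The structural induction on $\pi$ then goes through for all $n$ simultaneously: the variable case is settled because the context assigns a unique type; $\nil$ cannot occur since its only type is $\bot$; $M\at\varpi$ consumes one leading arrow (use the hypothesis at $n-1$, or reassemble directly by $[\atrule]$ when $n=0$); and $\cdr\varpi$ adds one (use the hypothesis at $n+1$). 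With the lemma in this strengthened form your argument is complete.
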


Another way to type the stack calculus is to choose a language with negation, conjunction and falsity, to be associated to abstraction,
 stack construction and empty stack, respectively. This approach mirrors the one used by Lafont et al. \cite{Stre93} to type 
 the $\lambda$-calculus with explicit pair constructor and projections. The result is an intuitionistic proof system that can be seen as the target of a CPS translation that embeds Classical Logic into a fragment of
 Intuitionistic Logic via a mapping that transforms the types but not the proofs; this can be done by two translations $\Pos{(\cdot)}$
 and $\Neg{(\cdot)}$ from $\{\to,\bot\}$-formulas into $\{\wedge,\neg,\bot\}$-formulas as follows:
 $\Pos{\bot} = \neg\bot$ and $\Pos{a} = a$, for every atom $a$; $\Pos{(A \to B)} = \Neg{A} \wedge \Pos{B}$; $\Neg{A} = \neg \Pos{A}$.
One obtains a ``rule-per-rule'' correspondence: under this point of view, the stack calculus is the target-language of a CPS translation from itself that alters the types but not the proofs, while
 the translation of Lafont et al. does change also the terms. 
\subsection{Translation of typed lambda-mu-calculus}\label{subsec:tran-lmu-typed}

The \lmu-calculus is endowed with a type system that is a sound and complete Natural Deduction system for purely implicational classical logic.

The type system has judgements that come in two forms: $\tylmu{\Gamma}{t}{A}{\Delta}$ and $\tylmu{\Gamma}{p}{}{\Delta}$. On the left-hand side, $\Gamma$ represents a context $\seq x\asm\seq A$ of assumptions for the free $\lambda$-variables, while on the right-and side, $\Delta$ represents a context $\seq \ga\asm\seq B$ of assumptions for the free names.

\noindent
\begin{tabular}{|ccc|}
\hline
 & & \\
\begin{prooftree}
\tylmu{\Gamma,x\asm A}{t}{B}{\Delta}
\justifies
\tylmu{\Gamma}{\lambda x.t}{A \to B}{\Delta}
\thickness=0.08em
\using{[\absrule{x}]}
\end{prooftree}
 &
\begin{prooftree}
\tylmu{\Gamma}{t}{A\to B}{\Delta} \quad \tylmu{\Gamma}{s}{A}{\Delta}
\justifies
\tylmu{\Gamma}{ts}{B}{\Delta}
\thickness=0.08em
\using{[\apprule]}
\end{prooftree}
&
 \\ & & \\
\begin{prooftree}
\tylmu{\Gamma}{t}{A}{\Delta}
\justifies
\tylmu{\Gamma}{[\alpha]t}{}{\ga \asm A,\Delta}
\thickness=0.08em
\using{[\namrule]}
\end{prooftree}
 &
\begin{prooftree}
\tylmu{\Gamma}{p}{}{\beta\asm B,\Delta}
\justifies
\tylmu{\Gamma}{\mu\beta.p}{B}{\Delta}
\thickness=0.08em
\using{[\murule{\beta}]}
\end{prooftree}
 &
\begin{prooftree}
x \asm A \in \Gamma
\justifies
\tylmu{\Gamma}{x}{A}{\Delta}
\thickness=0.08em
\using{[\varrule]}
\end{prooftree}
 \\ & & \\
\hline
\end{tabular}\\

\textbf{Fig. 3:} Typed \lmu-calculus - propositional $\{\to\}$-fragment.


Given a context $\Gamma = x_1\asm A_1,\ldots,x_n\asm A_n$ and a sequence of formulas $\seq C = C_1,\ldots,C_n$ we write $\Gamma \to \seq C$ as an abbreviation for $x_1\asm A_1\to C_1,\ldots,x_n\asm A_n\to C_n$.

\begin{theorem}\label{thm:lmu-tranlsation-type}
\begin{enumerate}[(i)]
\item If $\tylmu{\Gamma}{t}{B}{\Delta}$, then for all sequences $\seq C$ of formulas we have $\tystk{t}{\Te{t}}{B}{\Gamma \to \seq C,\Delta}$.
\item If $\tylmu{\Gamma}{p}{}{\Delta}$, then for all sequences $\seq C$ of formulas we have $\tystk{p}{\Te{p}}{}{\Gamma \to \seq C,\Delta}$.
\end{enumerate}
\end{theorem}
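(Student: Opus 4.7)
The plan is to prove both statements simultaneously by induction on the derivation of the $\lambda\mu$-typing judgement. The freedom to choose \emph{any} sequence $\seq C$ in the statement is crucial: it lets us instantiate $\seq C$ differently at different levels of the induction, in particular when passing from $\lambda x.t$ to its body $t$, where we will need to add a specific entry to $\seq C$ corresponding to the bound variable $x$.

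First I would dispatch the four cases that go through essentially by reading off the corresponding derivation tree in the typed stack calculus. For the variable rule, $\Te{x} = \bd\gb.\car{x}\ap\gb$: since $x\asm A\in\Gamma$ gets translated to $x\asm A\to C_i\in\Gamma\to\seq C$ for some $C_i$, one application of $\carrule$ yields $\car{x}$ of type $A$, and introducing a fresh $\gb\asm A$ one closes off with $\aprule$ and $\bdrule{\gb}$. Crucially the derivation works for any $C_i$, so the choice of $\seq C$ is arbitrary here. The application case $\Te{(ts)} = \bd\gb.\Te{t}\ap\Te{s}\at\gb$ uses the induction hypothesis on $t$ and $s$ with the \emph{same} $\seq C$, then chains $\atrule$, $\aprule$, and $\bdrule{\gb}$ with $\gb$ fresh of type $B$. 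The naming rule $\Te{([\alpha]t)} = \Te{t}\ap\alpha$ uses the IH on $t$ (giving $\Te{t}\asm A$), followed by $\varrule$ on $\alpha$ and $\aprule$. The $\mu$-rule $\Te{(\mu\alpha.p)} = \bd{\alpha}.\Te{p}$ uses the IH on $p$ and one $\bdrule{\alpha}$.

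The case requiring the greatest care is the abstraction $\Te{(\lambda x.t)} = \bd{x}.\Te{t}\ap\cdr{x}$. Here the $\lambda\mu$-premise is $\tylmu{\Gamma,x\asm A}{t}{B}{\Delta}$, and we want the conclusion in stack calculus to hold in context $\Gamma\to\seq C,\Delta$, for an arbitrary $\seq C$. I apply the induction hypothesis to the premise with the \emph{extended} sequence $\seq C, B$; this gives $\tystk{t}{\Te{t}}{B}{\Gamma\to\seq C,\,x\asm A\to B,\,\Delta}$, which is exactly what is needed so that $\cdr{x}$ has type $B$ (by $\cdrrule$), then $\Te{t}\ap\cdr{x}$ is a well-typed process (by $\aprule$), and finally $\bd{x}.\Te{t}\ap\cdr{x}$ has type $A\to B$ (by $\bdrule{x}$). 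This strategic extension of $\seq C$ by the codomain $B$ is the one step where the universal quantification over $\seq C$ in the statement is actually used.

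The main conceptual obstacle—and the reason the statement is formulated with an arbitrary $\seq C$—is precisely this reconciliation of the translation of $\lambda$-variables into stacks: a $\lambda$-variable $x$ of type $A$ is realised as a stack $x\asm A\to C$ whose head yields $A$ and whose tail will eventually receive the ``continuation type'' $C$. The induction must therefore be flexible enough to postpone the decision about $C$ until it is forced by the abstraction case. Apart from this, nothing essential happens: each clause of the translation mirrors one rule of the stack calculus's type system, and by proceeding by mutual induction on the two judgement forms we cover both items (i) and (ii) in one pass.
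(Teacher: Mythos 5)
Your proof is correct and follows what is clearly the intended argument: a mutual induction on the $\lambda\mu$ typing derivation, where the universal quantification over $\seq C$ is exactly the strengthening needed so that the abstraction case can instantiate the new entry with the codomain $B$, making $\cdr{x}$ well-typed at $B$ and $\bd x.\Te{t}\ap\cdr{x}$ at $A\to B$ (the paper omits the proof, but each of your other cases reads off the corresponding stack-calculus rule as expected, modulo routine weakening when fresh variables such as $\gb$ are added to the context).
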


From Theorem \ref{thm:lmu-tranlsation-type} results clearly that when the $\lambda$-variables are looked at as stack variables, they are endowed with a stream type of which only the type of the head is uniquely determined.

Finally we observe that the empty stack $\nil$ does not appear in the translations of \lmu-terms. It is needed if we want to translate the so-called \lmu-top calculus \cite{Herbelin03}: in fact one can naturally set $\Te{([top]t)} = \Te{t} \ap \nil$.

\subsection{Realizability interpretation of classical logic via stack calculus}\label{subsec:real-stack}

In this section we set up a framework which is the analogue of Krivine's Classical Realizability \cite{Krivine01}. Krivine's idea is to interpret
 implicational formulas at the same time as sets of stacks and sets of terms of his modified $\lambda$-calculus obtaining, respectively, falsehood and
 truth values for the formulas. This method has many applications, among which the extraction of programs \emph{realizing} mathematical theorems in the
 context of relevant logical theories such as Zermelo\textendash Frenkel Set Theory and Analysis \cite{Krivine01}. We will apply particular
 instances of realizability interpretation in Sections \ref{sec:sound-compl-stack} and \ref{subsec:norm-tystk} to prove soundness and strong
 normalization of our typed calculus.

Let $\bT\subseteq \KTer{t}$ and $\mbz\subseteq \KTer{s}$ be given sets of terms and stacks, respectively,
 such that $\nil \in \mbz$ and \\ if $M \in \bT$ and $\pi \in \mbz$, then $M\at\pi\in \mbz$ and $\cdr{\pi}\in \mbz$.

We define three binary relations $\succ_{\mathsf{s}}$, $\succ_{\mathsf{t}}$, $\succ_{\mathsf{p}}$ on
 $\KTer{s}$, $\KTer{t}$ and $\KTer{p}$, respectively, as the smallest reflexive relations satisfying the following conditions:
\begin{itemize}
\item $\succ_{\mathsf{s}}$ is transitive;
\item if $M\in\bT$, $\pi\in\mbz$ and $\varpi \succ_{\mathsf{s}} M\at\pi$, then $\car{\varpi} \succ_{\mathsf{t}} M$ and $\cdr{\varpi} \succ_{\mathsf{s}} \pi$;
\item if $\pi\in\mbz$, then $(\bd\ga.P)\ap \pi \succ_{\mathsf{p}} P\sub{\pi}{\ga}$;
\item if $M' \succ_{\mathsf{t}} M$, then $M'\ap\pi \succ_{\mathsf{p}} M\ap\pi$.
\end{itemize}

Moreover we let $\succ_{\mathsf{e}} = \succ_{\mathsf{p}} \cup \succ_{\mathsf{s}} \cup \succ_{\mathsf{t}}$
 and we say that a set $X\subseteq \KTer{e}$ is \emph{saturated} if $E\in X$ and $E' \succ_{\mathsf{e}} E$ imply $E' \in X$.
 For $X\subseteq \KTer{e}$, we let $\Pss{X}$ denote the family of all saturated subsets of $X$.

\begin{definition}\label{def:realizability-triple}
A triple $(\bbot,\bT,\mbz)$ of sets is a \emph{realizability triple} if $\bbot\subseteq \KTer{p}$, $\bT\subseteq \KTer{t}$, $\mbz\subseteq \KTer{s}$
 are all saturated.
\end{definition}

\begin{definition}[Realizability relation]\label{def:realizes}
Let $(\bbot,\bT,\mbz)$ be a realizability triple. We define a binary relation\\
 $\realize \subseteq \bT \times \Pss{\mbz}$ as $M \realize X$ iff $\forall \pi \in X.\ M \ap \pi \in \bbot$.
\end{definition}

If $M \realize X$, we say that $M$ \emph{realizes} $X$, or that $M$ is a \emph{realizer} of $X$; the set of realizers of $X$ is
 \mbox{$\rea{X} = \{ M \in \bT \st M \realize X\}$}. We define the following binary operation on $\Ps{\KTer{s}}$ as follows:\\
 \mbox{$X \Rightarrow Y= \{\varpi \in \mbz \st \exists M \in \rea{X}.\exists \pi \in Y.\ \varpi \succ_{\mathsf{s}} M\at\pi\}$}.

We indicate by $\At$ the set of all atomic formulas, which includes $\bot$ and a countable set of atoms.
 We indicate by $\Form$ the set of all formulas built from $\At$ with the connective $\to$.
 We use the following conventions: letters $A,B,C,\ldots$ range over $\Form$, and $F,G,H,\ldots$ range over $\At$.
 We let arrows associate to the right, so that $A \to B \to C \equiv A \to (B \to C)$. Every formula is of the form
 $B_1 \to \cdots \to B_n \to G$, where $G$ is atomic. As usual the negation is defined as $\neg A \ass A \to \bot$.
 
Let $\bR=(\bbot,\bT,\mbz)$ be a realizability triple. An \emph{atomic $\bR$-interpretation} is a function $\cI: \At \to \Pss{\mbz}$ such that
 $\cI(\bot) = \mbz$. Then $\cI$ extends uniquely to a map $\fval{\cdot}_{\cI}: \Form \to \Ps{\KTer{s}}$ by setting
 $\fval{A \to B}_{\cI} = \fval{A}_{\cI} \Rightarrow \fval{B}_{\cI}$. The set $\fval{A}_{\cI}$ is called the \emph{falsehood value} of the formula $A$ under $\cI$.
 The \emph{truth value} $\tval{A}_{\cI}$ of a formula $A$ under $\cI$ is given by $\tval{A}_{\cI} = \rea{\fval{A}_{\cI}}$.

\begin{proposition}\label{prop:sat-int}
For every formula $A$, $\fval{A}_{\cI} \in \Pss{\mbz}$ and $\tval{A}_{\cI} \in \Pss{\bT}$.
\end{proposition}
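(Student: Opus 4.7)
The plan is to proceed by induction on the formula $A$, and to first isolate the fact that the realizer operation automatically produces saturated sets of terms.

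First I would prove the auxiliary fact that for \emph{any} subset $X\subseteq \mbz$, the set $\rea{X}$ is saturated (hence in $\Pss{\bT}$). Indeed, take $M\in\rea{X}$ and $M'\succ_{\mathsf{t}} M$. Since $\bT$ is saturated and $M'\succ_{\mathsf{e}} M$, we get $M'\in\bT$. For every $\pi\in X$ we have $M\ap\pi\in\bbot$; by the fourth clause defining $\succ_{\mathsf{p}}$, $M'\ap\pi\succ_{\mathsf{p}} M\ap\pi$, so by saturation of $\bbot$ we conclude $M'\ap\pi\in\bbot$. Hence $M'\in\rea{X}$. This takes care of the $\tval{A}_{\cI}$ half of the proposition uniformly, once $\fval{A}_{\cI}\subseteq\mbz$ is established.

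Next I proceed by induction on $A\in\Form$. For the base case $A\in\At$: if $A=\bot$, then $\fval{\bot}_{\cI}=\mbz$, which is saturated by hypothesis (every set is saturated when it coincides with its ambient set; more precisely, $\mbz\in\Pss{\mbz}$ by definition since the inclusion is trivial and $\succ_{\mathsf{e}}$-closure inside $\mbz$ is automatic). For $A$ any other atom, $\fval{A}_{\cI}=\cI(A)\in\Pss{\mbz}$ by the very definition of an atomic $\bR$-interpretation.

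For the inductive step $A=B\to C$, one has $\fval{B\to C}_{\cI}=\fval{B}_{\cI}\Rightarrow\fval{C}_{\cI}$. To verify saturation, suppose $\varpi\in\fval{B\to C}_{\cI}$ and $\varpi'\succ_{\mathsf{s}}\varpi$. By definition of $\Rightarrow$ there exist $M\in\rea{\fval{B}_{\cI}}$ and $\pi\in\fval{C}_{\cI}$ with $\varpi\succ_{\mathsf{s}} M\at\pi$. Transitivity of $\succ_{\mathsf{s}}$ yields $\varpi'\succ_{\mathsf{s}} M\at\pi$, and saturation of $\mbz$ gives $\varpi'\in\mbz$; hence $\varpi'\in\fval{B\to C}_{\cI}$, so $\fval{B\to C}_{\cI}\in\Pss{\mbz}$. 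The statement about $\tval{A}_{\cI}$ now follows from the auxiliary fact applied with $X=\fval{A}_{\cI}\subseteq\mbz$.

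The only mild obstacle is making sure that every ingredient invoked in the saturation arguments (transitivity of $\succ_{\mathsf{s}}$, the ``$M'\succ_{\mathsf{t}} M$ implies $M'\ap\pi\succ_{\mathsf{p}} M\ap\pi$'' clause, and closure of $\mbz$ and $\bT$ under $\succ_{\mathsf{e}}$) is one of the clauses that has actually been assumed; the rest is bookkeeping, and no clever construction or case analysis beyond the single $\to$ connective is needed.
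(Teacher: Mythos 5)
Your proof is correct and follows essentially the same route as the paper's (which only sketches it): induction on formulas, closure of $\Pss{\mbz}$ under $\Rightarrow$ for falsehood values, and for truth values the clause ``$M'\succ_{\mathsf{t}}M$ implies $M'\ap\pi\succ_{\mathsf{p}}M\ap\pi$'' together with saturation of $\bbot$ (and of $\bT$). Your write-up merely fills in the bookkeeping the paper leaves implicit; the one slightly muddled parenthetical about $\mbz\in\Pss{\mbz}$ is harmless, since the real justification is that $\mbz$ is saturated by the definition of a realizability triple, which you also state.
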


\begin{proof}
By induction on the structure of formulas. For falsehood values is suffices
 to observe that $\Pss{\mbz}$ is closed under the $\Rightarrow$ operation. For truth values,
 use the fact that $M' \succ_{\mathsf{t}} M$ implies $M'\ap\pi \succ_{\mathsf{p}} M\ap\pi$
 and the saturation of $\bbot$.
\end{proof}

If $\seq \pi = \pi_1,\ldots,\pi_n$ and $\seq B = B_1,\ldots,B_n$ are sequences, we write $\seq \pi \in \fval{\seq B}_{\cI}$ as an abbreviation for $\pi_1 \in \fval{B_1}_{\cI},\ldots, \pi_n \in \fval{B_n}_{\cI}$. The next theorem 
is the stack calculus analogue of Krivine's Adequacy Theorem \cite{Krivine01}, which shows that realizability is compatible with deduction in classical logic. It is an essential tool that will be used to obtain, in a uniform way, both soundness and strong normalization of the typed calculus.

\begin{theorem}[Adequacy theorem]\label{thm:adequacy}
Let $\bR =(\bbot,\bT,\mbz)$ be a realizability triple and let $\cI$ be an $\bR$-interpretation.
 If $\seq \pi \in \fval{\seq B}_{\cI}$ then
\begin{itemize}
\item[(i)] If $\tystk{s}{\varpi}{A}{\seq \ga \asm \seq B}$, then $\varpi\sub{\seq \pi}{\seq \ga} \in \fval{A}_{\cI}$;
\item[(ii)] If $\tystk{t}{M}{A}{\seq \ga \asm \seq B}$, then $M\sub{\seq \pi}{\seq \ga} \in \tval{A}_{\cI}$;
\item[(iii)] If $\tystk{p}{P}{}{\seq \ga \asm \seq B}$, then $P\sub{\seq \pi}{\seq \ga} \in \bbot$.
\end{itemize}
\end{theorem}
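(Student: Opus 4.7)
I will prove the three items simultaneously by induction on the structure of the typing derivation, exploiting Proposition~\ref{prop:sat-int} (so $\fval{A}_\cI \in \Pss{\mbz}$ and $\tval{A}_\cI \in \Pss{\bT}$) and the Substitution Lemma \ref{lemma:subst-Lemma}. Throughout, let $\sigma$ denote the substitution $\sub{\seq\pi}{\seq\ga}$. I must check one case per inference rule of the system in Fig.~2.

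For the base cases, rule $[\varrule]$ is immediate: if $\gamma_i \asm B_i \in \seq\ga\asm\seq B$, then $\ga_i\sigma = \pi_i \in \fval{B_i}_\cI$ by hypothesis. Rule $[\nilrule]$ follows because $\nil\sigma = \nil \in \mbz = \fval{\bot}_\cI$ by the hypothesis $\nil\in\mbz$ on the realizability triple. For rule $[\atrule]$, by the inductive hypothesis $M\sigma \in \tval{A}_\cI = \rea{\fval{A}_\cI}$ and $\pi\sigma \in \fval{B}_\cI \subseteq \mbz$; since $\tval{A}_\cI \subseteq \bT$, the closure assumptions on $\bT,\mbz$ yield $M\sigma \at \pi\sigma \in \mbz$; and then $M\sigma\at\pi\sigma \succ_\mathsf{s} M\sigma\at\pi\sigma$ by reflexivity, so $(M\at\pi)\sigma \in \fval{A}_\cI \Rightarrow \fval{B}_\cI = \fval{A\to B}_\cI$.

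Rules $[\carrule]$ and $[\cdrrule]$ are symmetric and use the inductive definition of $\succ_\mathsf{s}$ and $\succ_\mathsf{t}$ together with saturation. For $[\cdrrule]$, from $\pi\sigma \in \fval{A\to B}_\cI$ I extract $M' \in \rea{\fval{A}_\cI}$ and $\pi' \in \fval{B}_\cI$ with $\pi\sigma \succ_\mathsf{s} M'\at\pi'$; then the second defining clause of $\succ_\mathsf{s}$ gives $\cdr{\pi\sigma}\succ_\mathsf{s} \pi'$, and since $\fval{B}_\cI$ is saturated, $\cdr{\pi\sigma} = (\cdr{\pi})\sigma \in \fval{B}_\cI$. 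The $[\carrule]$ case is analogous, using $\car{\pi\sigma}\succ_\mathsf{t} M'$ and saturation of $\tval{A}_\cI$. Rule $[\aprule]$ is a direct unpacking of Definition~\ref{def:realizes}: by induction $M\sigma \in \tval{A}_\cI = \rea{\fval{A}_\cI}$ and $\pi\sigma \in \fval{A}_\cI$, hence $(M\ap\pi)\sigma = M\sigma \ap \pi\sigma \in \bbot$.

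The main case, and the one I expect to be the most delicate, is rule $[\bdrule\ga]$. Here, from $\tystk{p}{P}{}{\Delta, \ga\asm A}$ I must show $(\bd\ga.P)\sigma \in \tval{A}_\cI$, i.e.\ for every $\pi \in \fval{A}_\cI$, $(\bd\ga.P)\sigma \ap \pi \in \bbot$. I pick an arbitrary such $\pi$; by $\alpha$-renaming I may assume $\ga$ is fresh for $\sigma$, so $(\bd\ga.P)\sigma = \bd\ga.(P\sigma)$, and $\pi \in \fval{A}_\cI \subseteq \mbz$. Applying the inductive hypothesis to the premise with the extended substitution $\sigma' = \sigma\sub{\pi}{\ga}$ gives $P\sigma' \in \bbot$; the Substitution Lemma \ref{lemma:subst-Lemma} shows $P\sigma' \equiv (P\sigma)\sub{\pi}{\ga}$. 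Since $\pi \in \mbz$, the third clause defining $\succ_\mathsf{p}$ yields $(\bd\ga.P\sigma)\ap\pi \succ_\mathsf{p} (P\sigma)\sub{\pi}{\ga}$, and saturation of $\bbot$ finishes the case. The only real subtlety is the bookkeeping that ensures the substitutions compose cleanly (via the Substitution Lemma) and that the freshness side-conditions of $\alpha$-conversion keep the shape of the expressions intact; the rest is routine propagation of the saturation and closure properties packaged in Proposition \ref{prop:sat-int}.
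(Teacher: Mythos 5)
Your proof follows exactly the strategy the paper indicates (a simultaneous induction on type derivations, with one case per rule of Fig.~2), and the cases for $[\varrule]$, $[\nilrule]$, $[\atrule]$, $[\carrule]$, $[\cdrrule]$ and $[\aprule]$ are handled correctly, with the right appeals to Proposition~\ref{prop:sat-int}, to the closure conditions on $\bT,\mbz$, and to saturation.

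There is, however, one step in the $[\bdrule{\ga}]$ case that does not go through as written. You reduce the goal $(\bd\ga.P)\sigma \in \tval{A}_{\cI}$ to ``for every $\pi \in \fval{A}_{\cI}$, $(\bd\ga.P)\sigma \ap \pi \in \bbot$'', but by Definition~\ref{def:realizes} the set $\tval{A}_{\cI} = \rea{\fval{A}_{\cI}}$ is \emph{by definition} a subset of $\bT$, so membership also requires $(\bd\ga.P)\sigma \in \bT$ --- and nothing in the definition of a realizability triple (saturation of $\bbot,\bT,\mbz$ plus the closure of $\mbz$ under $\nil$, $\at$ and $\mathsf{cdr}$) guarantees that a $\bd$-abstraction belongs to $\bT$. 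Indeed, taken literally the statement fails for the degenerate triple $(\emptyset,\emptyset,\KTer{s})$, for which every $\tval{A}_{\cI}$ is empty while closed typable terms exist. So the gap is as much in the theorem's hypotheses as in your proof, but since you single out the abstraction case as ``the only real subtlety'' and then treat it as pure bookkeeping, you should either add the missing hypothesis (e.g.\ that $\bT$ contains all $\bd$-abstractions whose body is sent into $\bbot$ by the relevant substitutions, or simply that $\bT$ is closed under $\bd$-abstraction) or verify it in each instantiation: for $\bB=(\emptyset,\KTer{t},\KTer{s})$ it is trivial since $\bT=\KTer{t}$, and for $\bS=(\SN{p},\SN{t},\SN{s})$ it needs a short extra argument ($\bd\ga.(P\sigma)\in\SN{t}$ follows from $(P\sigma)\sub{\pi}{\ga}\in\SN{p}$ for some $\pi$, using the fact that every $\fval{A}_{\cI}$ is inhabited in that interpretation). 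Everything else in your argument is sound.
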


One proves all items simultaneously proceeding by induction on the depth of type derivations.
\subsection{Normalization in the typed stack calculus}\label{subsec:norm-tystk}

We are now going to prove that the typed stack calculus is strongly normalizing. We prove this fact by adapting the reducibility candidates technique to our
 setting. It becomes a sort of instance of Krivine's adequacy theorem in the context of Classical Realizability. We let $\SN{e} \subseteq \KTer{e}$ be the set
 of all strongly normalizing expressions of the stack calculus (w.r.t. $\labelto{\s\eta}$-reduction);
 $\SN{t}$, $\SN{p}$, $\SN{s}$ denote the sets all strongly normalizing terms, processes and stacks, respectively.
 
\begin{proposition}\label{prop:sn-triple}
$\bS = (\SN{p},\SN{t},\SN{s})$ is a realizability triple. 
\end{proposition}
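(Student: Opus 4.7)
The plan is to verify the four requirements for $\bS = (\SN{p}, \SN{t}, \SN{s})$ to be a realizability triple: $\nil \in \SN{s}$; closure of $\SN{s}$ under $M\at\cdot$ (for $M \in \SN{t}$) and under $\cdr$; and saturation of $\SN{p}, \SN{t}, \SN{s}$ under $\succ_\mathsf{p}, \succ_\mathsf{t}, \succ_\mathsf{s}$. The first is trivial since $\nil$ is a normal form. For $M\at\pi$: any reduction is interior to $M$ or $\pi$ (bounded by their SN) or is a top-level $\eta_2$-step $\car\rho\at\cdr\rho \to \rho$, which requires $M = \car\rho$ and $\pi = \cdr\rho$ and lands on the subterm $\rho$ of $M$, hence SN; after such a step we leave the $\bullet\at\bullet$ shape, so every reduction sequence is finite. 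The case of $\cdr\pi$ is analogous, the head $\cdr$-extraction landing on a subterm of a reduct of $\pi$.

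Saturation is proved by joint structural induction on the $\succ$-derivations. Reflexivity is immediate, and transitivity of $\succ_\mathsf{s}$ is handled by two applications of the IH. The rule-2 case $\cdr\varphi \succ_\mathsf{s} \pi$ (with $\varphi \succ_\mathsf{s} N\at\pi$) uses the ground closure to see that $N\at\pi \in \SN{s}$, the IH to get $\varphi \in \SN{s}$, and the same $\cdr$-closure analysis as above. Saturation of $\SN{t}$ at the only non-trivial case $\car\varphi \succ_\mathsf{t} N$ is symmetric. For $\SN{p}$, the rule-3 case $(\bd\ga.Q)\ap\pi \succ_\mathsf{p} Q\sub{\pi}{\ga}$ is the classical Tait-style closure under weak-head expansion: by induction on $\nu(Q\sub{\pi}{\ga}) + \nu(\pi)$ (with $\nu$ the maximum reduction length), the reducts split into interior reductions $(\bd\ga.Q')\ap\pi$ or $(\bd\ga.Q)\ap\pi'$ (strictly smaller measure, using that $Q \to Q'$ entails $Q\sub{\pi}{\ga} \to Q'\sub{\pi}{\ga}$ and $\pi \to \pi'$ entails $Q\sub{\pi}{\ga} \to^* Q\sub{\pi'}{\ga}$), the $\eta_1$-step $(\bd\ga.R\ap\ga)\ap\pi \to R\ap\pi$ whose target equals $Q\sub{\pi}{\ga}$, and the head $\bd$-step with target $Q\sub{\pi}{\ga}$; each destination is SN.

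The main obstacle is the remaining rule-4 case: $M'\ap\pi \succ_\mathsf{p} M\ap\pi$ with $M' \succ_\mathsf{t} M$ non-reflexive, i.e., $M' = \car\varphi$ with $\varphi \succ_\mathsf{s} M\at\sigma$ for some $\sigma \in \SN{s}$. The difficulty is that $M\ap\pi \in \SN{p}$ pins down only one specific reduct of $\car\varphi\ap\pi$, whereas the latter admits many reducts $N\ap\pi'$ arising whenever $\varphi$ reduces to some $N\at\tau$. My strategy is to isolate a sub-lemma: \emph{if $\varphi \succ_\mathsf{s} M\at\sigma$, $\varphi, \pi \in \SN{s}$, $M \in \SN{t}$, $\sigma \in \SN{s}$, and $M\ap\pi \in \SN{p}$, then $\car\varphi\ap\pi \in \SN{p}$}. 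I would prove it by outer induction on the $\succ_\mathsf{s}$-derivation and inner induction on $\nu(\varphi) + \nu(\pi) + \nu(M\ap\pi)$, with confluence (Theorem \ref{thm:CR-bd-eta}) as the crucial tool: any reduct $\varphi'$ of $\varphi$ shares a common reduct $\tau^\ast$ with $M\at\sigma$, and $\tau^\ast$ is either again of shape $N_1\at\sigma_1$ (so $M \to^* N_1$, the head-extract $N_1\ap\pi$ is a reduct of $M\ap\pi$, hence SN, and the IH applies to the pair $(\varphi', N_1\ap\pi)$) or the result of an $\eta_2$-collapse of $M\at\sigma$ (forcing $M = \car\psi$, $\sigma = \cdr\psi$, in which case the SN of the relevant head-extract is recovered from the reductions of $\car\psi\ap\pi = M\ap\pi$ that pass through the corresponding $\car$-extraction on $\psi$). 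Propagating the invariant through the transitivity and rule-2 sub-cases of the outer induction requires careful bookkeeping, but delivers the saturation and closes the proof.
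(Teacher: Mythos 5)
Your overall architecture matches the paper's: verify the closure conditions ($\nil\in\SN{s}$, closure of $\SN{s}$ under $M\at\cdot$ and $\cdr$), then prove saturation by induction on the generation of $\succ_{\mathsf{e}}$, with the rule for $M'\ap\pi \succ_{\mathsf{p}} M\ap\pi$ (where $M'\equiv\car\varphi$ and $\varphi\succ_{\mathsf{s}}M\at\sigma$) as the crux. The closure conditions, the transitivity and rule-2 cases, and the Tait-style head-expansion argument for the $\bd$-rule (including the $\eta_1$ sub-case) are all handled correctly, and the paper itself only sketches the remaining case, which it explicitly flags as ``the main point.''

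It is precisely there that your argument does not close. Your sub-lemma carries the hypothesis $\varphi\succ_{\mathsf{s}}M\at\sigma$, but in the inductive step for a reduct $\car{\varphi'}\ap\pi$ you invoke confluence, which only yields a common $\to^{*}$-reduct $N_1\at\sigma_1$ of $\varphi'$ and $M\at\sigma$; you then ``apply the IH to the pair $(\varphi',N_1\ap\pi)$'', which would require $\varphi'\succ_{\mathsf{s}}N_1\at\sigma_1$. Confluence does not give you that: $\succ_{\mathsf{s}}$ is a very restricted sub-relation of reverse reduction (essentially iterated $\cdr$-expansions), and the paper's counterexample with $M'\equiv \bd\ga.(\bd\gb.\cadr{\gb}{1}\ap\gb)\ap(\bd\gc.\cadr{\ga}{0}\ap\ga)\at\nil$ is there exactly to warn that saturation fails for coarser relations. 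So either you prove a separate commutation lemma (``if $\varphi\succ_{\mathsf{s}}M\at\sigma$ and $\varphi\to\varphi'$ then $\varphi'\succ_{\mathsf{s}}N_1\at\sigma_1$ for some $N_1\at\sigma_1$ with $M\ap\pi\msto N_1\ap\pi$''), which is the real content of this case and is missing, or you restate the sub-lemma with $\to^{*}$ in place of $\succ_{\mathsf{s}}$ — but then it is a different, stronger statement needing its own justification, and your second sub-case collapses: when the common reduct is not of cons shape, the $\eta_2$-collapse need not occur at the root of $M\at\sigma$ immediately (one only gets $M\msto\car\psi$, $\sigma\msto\cdr\psi$ after internal reductions, not $M\equiv\car\psi$), and the phrase ``recovered from the reductions of $M\ap\pi$ that pass through the corresponding $\car$-extraction'' does not identify which process reachable from $M\ap\pi$ bounds the extractions from $\varphi'$. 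You should also treat explicitly the head $\mathsf{car}$-step itself (when $\varphi$ is already a cons $N\at\tau$ with $N\not\equiv M$), which your case analysis of ``reducts $\varphi'$ of $\varphi$'' does not cover and which raises the same difficulty, since there one only knows $N\msto M$ — the wrong direction for transferring strong normalization.
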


The proof of Proposition \ref{prop:sn-triple} consists in showing that if $E' \succ_{\mathsf{e}} E$ and $E\in\SN{p}$ (resp. $E\in\SN{t}$, $E\in\SN{s}$), then
 also $E'\in\SN{p}$ (resp. $E'\in\SN{t}$, $E'\in\SN{s}$). One can proceed by induction on the definition of $\succ_{\mathsf{e}}$. The main point of such a proof is   
 when we consider the case in which $P\equiv M\ap\pi\in\SN{p}$ and $P'\equiv M'\ap\pi$ with $M'\succ_{\mathsf{t}} M$ because there there exist $\varpi$ and $\pi'$ such that
 $\varpi \succ_{\mathsf{s}} M\at\pi'$ and $M'\equiv \car{\varpi}$. Here one can show that if $M'\ap\pi$ has an infinite reduction path, then $M\ap\pi$ has an infinite reduction path too.
 Note that it is crucial that for the terms $M' \equiv \bd\ga.(\bd\gb.\cadr{\gb}{1}\ap\gb)\ap(\bd\gc.\cadr{\ga}{0}\ap\ga)\at\nil$ and $M \equiv \bd\ga.\cadr{\nil}{0}\ap\nil$ we have
 $M'\not\succ_{\mathsf{t}} M$. In fact, setting $\pi \equiv (\bd\gd.\cadr{\gd}{0}\ap\gd)\at\nil$, we obtain that $M\ap\pi$ is strongly normalizing but $M'\ap\pi$ is not strongly normalizing.

Let $A$ be a formula. We define its \emph{arity} $\rrank{A}$ by induction setting $\rrank{G} = 0$ and\\ $\rrank{A\to B} = 1 + \rrank{B}$.
 It is convenient sometimes to use abbreviations $\cddr{\pi}{n} \ass \mathsf{cdr}(\cdots\mathsf{cdr}(\pi)\cdots)$\\
 ($n$ times) and $\cadr{\pi}{n} \ass \car{\cddr{\pi}{n}}$, in order to make some expressions more readable.

\begin{theorem}[Strong normalization]\label{thm:strong-norm}
Let $M \in \KTer{t}$, $\pi \in \KTer{s}$ and $P \in \KTer{p}$.
\begin{itemize}
\item[(i)] If there exist $\Delta,A$ such that $\tystk{s}{\pi}{A}{\Delta}$, then $\pi \in \SN{s}$;
\item[(ii)] If there exist $\Delta,A$ such that $\tystk{t}{M}{A}{\Delta}$, then $M \in \SN{t}$;
\item[(iii)] If there exist $\Delta$ such that $\tystk{p}{P}{}{\Delta}$, then $P \in \SN{p}$.
\end{itemize}
\end{theorem}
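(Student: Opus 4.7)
The plan is to deduce the theorem as an instance of the Adequacy Theorem (Theorem~\ref{thm:adequacy}) applied to the realizability triple $\bS=(\SN{p},\SN{t},\SN{s})$ furnished by Proposition~\ref{prop:sn-triple}. To do so I would fix the atomic $\bS$-interpretation $\cI_0$ given by $\cI_0(G)=\SN{s}$ for every $G\in\At$; this is legal since $\cI_0(\bot)=\mbz$ as required and $\SN{s}\in\Pss{\SN{s}}$ by Proposition~\ref{prop:sn-triple}.

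The main auxiliary step I would carry out is a simultaneous induction on the structure of the formula $A$ showing that both the falsehood value $\fval{A}_{\cI_0}$ and the truth value $\tval{A}_{\cI_0}$ are non-empty. For atomic $A$ the falsehood value is $\SN{s}$ itself, which contains $\nil$. For truth values, I would exhibit a universal realizer valid uniformly for every formula: the term $M_0\equiv\bd\alpha.\car{\nil}\ap\nil$, with $\alpha$ fresh in the body. Since $M_0$ is a normal form it belongs to $\SN{t}$, and for any $\pi\in\SN{s}$ the process $M_0\ap\pi$ admits only the top-level $(\bd)$-step, which rewrites it to the closed normal process $\car{\nil}\ap\nil$, or internal reductions inside $\pi$, which terminate by hypothesis; this forces $M_0\ap\pi\in\SN{p}$. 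Using $\fval{A}_{\cI_0}\subseteq\SN{s}$ (Proposition~\ref{prop:sat-int}) one then concludes $M_0\in\tval{A}_{\cI_0}$ uniformly in $A$. The inductive step for $\fval{B\to C}_{\cI_0}$ is then immediate: take any $\pi\in\fval{C}_{\cI_0}$ supplied by the induction hypothesis and build $M_0\at\pi$, which lies in $\SN{s}$ since both constituents are SN and the $\at$ constructor creates no new top-level redex, and which is self-related by $\succ_{\mathsf{s}}$ through reflexivity.

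With the non-emptiness lemma in place, the three statements of the theorem follow uniformly. Given $\tystk{s}{\pi}{A}{\seq\alpha\asm\seq B}$ I would pick some $\pi_i\in\fval{B_i}_{\cI_0}$ for each $i$ and invoke Theorem~\ref{thm:adequacy}(i) to obtain $\pi\sub{\seq\pi_i}{\seq\alpha}\in\fval{A}_{\cI_0}\subseteq\SN{s}$. Since simultaneous stack substitution commutes with every reduction rule of the extensional stack calculus (modulo $\alpha$-conversion), any infinite reduction of $\pi$ would lift to one of the substituted stack, contradicting its strong normalization; hence $\pi\in\SN{s}$. The cases of terms and processes are completely symmetric, relying on parts (ii) and (iii) of Theorem~\ref{thm:adequacy}.

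The hard part will be the verification that $M_0$ really realizes every truth value: one must rule out any infinite reduction of $M_0\ap\pi$ interleaving the top-level $(\bd)$-step with arbitrarily long internal reductions on $\pi$, and also confirm that no $(\eta_1)$-redex is hidden inside $M_0$ itself. The specific choice of $M_0$ is tailored to both points: its body $\car{\nil}\ap\nil$ is closed, so after the $(\bd)$-step the remaining reductions are confined to the stack $\pi$ (which is SN), and its shape $\bd\alpha.N\ap\varpi$ with $\varpi=\nil\neq\alpha$ means the extensional rule $(\eta_1)$ cannot apply. Once this is granted, the remainder of the argument is a textbook application of classical realizability.
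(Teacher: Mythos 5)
Your proposal is correct and follows the same skeleton as the paper's proof: both instantiate the Adequacy Theorem on the triple $\bS=(\SN{p},\SN{t},\SN{s})$ with the interpretation $\cI$ sending every atom to $\SN{s}$, exhibit witnesses $\seq \pi \in \fval{\seq B}_{\cI}$, apply Theorem~\ref{thm:adequacy}, and then transfer strong normalization from $E\sub{\seq \pi}{\seq \ga}$ back to $E$. The genuine difference is in the choice of witnesses and hence in that last transfer step. The paper takes $\pi_i \ass \cadr{\ga_i}{0}\at\ldots\at\cadr{\ga_i}{\rrank{B_i}-1}\at\cddr{\ga_i}{\rrank{B_i}}$, an $\et$-expansion of the variable $\ga_i$ itself, so that $E\sub{\seq \pi}{\seq \ga}\mslabelto{\eta}E$ and the conclusion follows simply because reducts of strongly normalizing expressions are strongly normalizing. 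You instead take closed canonical inhabitants ($\nil$ at atoms, iterated $M_0\at(-)$ at arrows, with $M_0\equiv\bd\ga.\car\nil\ap\nil$), which forces you to invoke the lemma that every $\labelto{\s\eta}$-step out of $E$ is simulated by a step out of $E\sub{\seq \pi}{\seq \ga}$, so that an infinite reduction of $E$ lifts to one of the substituted expression. That lemma does hold here (no reduction rule has a bare stack variable as the root of its left-hand side, so substituting stacks for free variables can create redexes but never erase or collapse an existing step), but it is an extra piece of infrastructure you would need to state and check against all five rules; the paper's witnesses are engineered precisely to avoid it. Your verification that $M_0$ realizes every saturated subset of $\SN{s}$ and your non-emptiness induction are sound and play the role of the paper's ``easy induction on the arity of formulas''. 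Net effect: both arguments are valid; yours buys simpler, closed witnesses at the price of an explicit substitution-lifting lemma, while the paper's buys a one-line conclusion at the price of slightly more elaborate witnesses.
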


\begin{proof}
Let $\Delta = \seq \alpha\asm\seq B$, where $\seq \ga = \ga_1,\ldots,\ga_n$ and $\seq B = B_1,\ldots,B_n$. Let $\cI$ be the $\bS$-interpretation
 sending every atom to $\SN{s}$ and set $\pi_i \ass \cadr{\ga_i}{0}\at\ldots\at\cadr{\ga_i}{\rrank{B_i}-1}\at\cddr{\ga_i}{\rrank{B_i}}$, for each $i=1,\ldots,n$ and
 $\seq \pi = \pi_1,\ldots,\pi_n$. An easy induction on the arity of formulas shows that $\seq \pi \in \fval{\seq B}_{\cI}$.
 By Theorem \ref{thm:adequacy} (i),(ii),(iii) respectively we get that\\
\noindent (i) $\varpi\sub{\seq \pi}{\seq \ga} \in \fval{A}_{\cI} \subseteq \SN{s}$,
          (ii) $M\sub{\seq \pi}{\seq \ga} \in \tval{A}_{\cI} \subseteq \SN{t}$ and
          (iii) $P\sub{\seq \pi}{\seq \ga} \in \SN{p}$.\\
Finally in each of the above cases we have $E\sub{\seq \pi}{\seq \ga} \mslabelto{\eta} E$ and since $E\sub{\seq \pi}{\seq \ga}$ is strongly
 normalizing, then so is $E$.
\end{proof}
\section{Soundness and completeness of typed stack calculus w.r.t. classical semantics}\label{sec:sound-compl-stack}

The present section provides soundness and completeness proofs of the stack calculus for the two-valued semantics of classical propositional logic. We find
 interesting to report the full completeness proof, which resembles very much a completeness proof for a tableaux calculus \cite{Smullyan68}. In fact, as in a
 tableaux system there are labeled formulas (with \emph{true} and \emph{false} labels), in the stack calculus we have terms and stacks which play, respectively,
 the role of proofs and counter-proofs, exactly in the spirit of Krivine's Classical Realizability.

It is easy matter to check that $\bB = (\emptyset,\KTer{t},\KTer{s})$ is a realizability triple. For every formula $A$ and $\bB$-interpretation
 $\cI$ we have
\[
\tval{A}_{\cI} =
\begin{cases}
\KTer{t}        & \text{ if } \fval{A}_{\cI} = \emptyset \\
\emptyset       & \text{ otherwise }
\end{cases}
\]
The induced function $\tval{\cdot}_{\cI}$ maps formulas into elements of the two-element boolean algebra $\{\KTer{t},\emptyset\}$, where the
 ordering is set-inclusion and the operators are $\cup$, $\cap$ and complement. In other words $\KTer{t}$ represents ``true" and $\emptyset$ represents ``false".
 The truth values behave as expected w.r.t. negation: $\tval{A}_{\cI}=\emptyset \iff \tval{\neg A}_{\cI}=\KTer{t}$.
 
\begin{definition}\label{def:sem-entail}
Let $\Phi$ be a set of formulas and let $A$ be a formula. We say that $\Phi$ \emph{semantically entails} $A$, notation $\Phi \vDash A$, if for every atomic
 $\bB$-interpretation $\cI$ we have that $\bigcap_{B \in \Phi} \tval{B}_{\cI} \subseteq \tval{A}_{\cI}$.
\end{definition}

\begin{theorem}[Soundness]\label{thm:soundness}
\ 
\begin{enumerate}[(i)]
\item If $\tystk{t}{M}{A}{\seq \gb \asm \seq B}$ is provable (where $\FV(M) \subseteq \seq \gb$), then
 $\neg B_1,\ldots,\neg B_n \vDash A$.
\item If $\tystk{s}{\pi}{A}{\seq \gb \asm \seq B}$ is provable (where $\FV(\pi) \subseteq \seq \gb$), then
 $\neg B_1,\ldots,\neg B_n \vDash \neg A$.
\item If $\tystk{p}{P}{}{\seq \gb \asm \seq B}$ is provable (where $\FV(P) \subseteq \seq \gb$), then
 $\neg B_1,\ldots,\neg B_n \vDash \bot$.
\end{enumerate}
\end{theorem}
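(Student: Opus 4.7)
The plan is to derive Soundness as a direct application of the Adequacy Theorem (Theorem~\ref{thm:adequacy}) instantiated at the realizability triple $\bB = (\emptyset, \KTer{t}, \KTer{s})$. The first step is to unpack how truth and falsehood values behave under this particular triple, which I would state as a small observation: since $\bbot = \emptyset$, for any $X \in \Pss{\KTer{s}}$ we have $\rea{X} = \KTer{t}$ when $X = \emptyset$ (vacuously) and $\rea{X} = \emptyset$ when $X \neq \emptyset$ (because no $M \ap \pi$ can lie in $\emptyset$). Hence for every atomic $\bB$-interpretation $\cI$ and every formula $A$, one has $\tval{A}_{\cI} = \KTer{t}$ iff $\fval{A}_{\cI} = \emptyset$, so that $\tval{\cdot}_{\cI}$ really takes values in the two-element boolean algebra; in particular $\tval{\neg A}_{\cI} = \KTer{t}$ iff $\fval{A}_{\cI} \neq \emptyset$ (using $\fval{\bot}_{\cI} = \KTer{s} \ni \nil$, hence $\tval{\bot}_{\cI} = \emptyset$).

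Fix an atomic $\bB$-interpretation $\cI$ and assume the hypotheses are classically true, i.e.\ $\tval{\neg B_i}_{\cI} = \KTer{t}$ for every $i$; by the observation above this means $\fval{B_i}_{\cI} \neq \emptyset$, so one can choose witnesses $\pi_i \in \fval{B_i}_{\cI}$ and set $\seq\pi = \pi_1, \ldots, \pi_n \in \fval{\seq B}_{\cI}$. With this choice the three cases are handled uniformly by invoking the corresponding clause of the Adequacy Theorem:
\begin{itemize}
\item[(ii)] From $\tystk{s}{\pi}{A}{\seq\gb \asm \seq B}$ and Adequacy~(i) one gets $\pi\sub{\seq\pi}{\seq\gb} \in \fval{A}_{\cI}$, so $\fval{A}_{\cI} \neq \emptyset$, i.e.\ $\tval{\neg A}_{\cI} = \KTer{t}$.
\item[(i)] From $\tystk{t}{M}{A}{\seq\gb \asm \seq B}$ and Adequacy~(ii) one gets $M\sub{\seq\pi}{\seq\gb} \in \tval{A}_{\cI} = \rea{\fval{A}_{\cI}}$; if $\fval{A}_{\cI}$ contained some $\varpi$, then $M\sub{\seq\pi}{\seq\gb} \ap \varpi \in \bbot = \emptyset$, a contradiction; hence $\fval{A}_{\cI} = \emptyset$, i.e.\ $\tval{A}_{\cI} = \KTer{t}$.
\item[(iii)] From $\tystk{p}{P}{}{\seq\gb \asm \seq B}$ and Adequacy~(iii) one gets $P\sub{\seq\pi}{\seq\gb} \in \bbot = \emptyset$, which is impossible; so the hypotheses cannot all be true under $\cI$, and the inclusion $\bigcap_i \tval{\neg B_i}_{\cI} \subseteq \tval{\bot}_{\cI}$ holds trivially.
\end{itemize}

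Since $\cI$ was arbitrary, each of the three semantic entailments follows from the definition of $\vDash$. The only mildly delicate point is verifying that for the triple $\bB$ the three components are indeed saturated (so that the Adequacy Theorem applies): $\emptyset$, $\KTer{t}$ and $\KTer{s}$ are trivially saturated since $\succ_{\mathsf{e}}$ preserves the underlying syntactic category, and the closure conditions required of $\bT$ and $\mbz$ in the setup preceding Definition~\ref{def:realizability-triple} are immediate for the full sets. I do not foresee any real obstacle; the whole argument is essentially a bookkeeping exercise once the two-valued behaviour of $\tval{\cdot}_{\cI}$ under $\bB$ is observed, and the only subtle case is~(iii), which is vacuous rather than constructive.
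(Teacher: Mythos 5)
Your proof is correct and follows essentially the same route as the paper: both instantiate the Adequacy Theorem at the triple $\bB=(\emptyset,\KTer{t},\KTer{s})$, use the resulting two-valued behaviour of $\tval{\cdot}_{\cI}$, pick witnesses $\pi_i\in\fval{B_i}_{\cI}$ from the assumed truth of the $\neg B_i$, and conclude; the paper merely compresses cases (ii) and (iii) into ``similar to (i)'' where you spell them out.
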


\begin{proof}
\noindent(i) Let $\cI$ be a $\bB$-interpretation. By Theorem \ref{thm:adequacy} (Adequacy) if for all $i \in [1,n]$ $\fval{B_i}_{\cI}\neq\emptyset$,
 then $M\sub{\seq \pi}{\seq \ga} \in \tval{A}_{\cI}$, i.e., $\tval{A}_{\cI} \neq \emptyset$. Since 
$\fval{B_i}_{\cI}\neq\emptyset \iff \tval{B_i}_{\cI} =\emptyset \iff \tval{\neg B_i}_{\cI} =\KTer{t}$, we conclude that every derivable
 judgement $\tystk{t}{M}{A}{\seq \gb \asm \seq B}$ has the following property: for every $\cI$, if $\tval{\neg B_i}_{\cI} =\KTer{t}$ for
 all $i\in [1,n]$, then $\tval{A}_{\cI} = \KTer{t}$. This means, by definition, that $\neg B_1,\ldots, \neg B_n \vDash A$.\\
\noindent(ii),(iii) Similar to (i), again applying Theorem \ref{thm:adequacy}.
\end{proof}

The main goal of the rest of the section is to prove that every classical tautology is the type of some term of the stack-calculus.
 The proof is supported by some auxiliary definitions and lemmas.

\begin{definition}\label{def:higher-terminal-rank}
Let $A$ be a formula. We define its \emph{terminal} $\termin{}{}{A}$ by induction setting $\termin{}{}{G} = G$ and $\termin{}{}{A \to B} = \termin{}{}{B}$.
 We also define its \emph{premisses} $\prem{}{}{A}$ by induction setting $\prem{}{}{G} = \emptyset$ and $\prem{}{}{A \to B} = \{A\} \cup \prem{}{}{B}$.
\end{definition}


\begin{definition}\label{def:higher_premterm}
Let $\Phi$ be a set of formulas. We define three sets $\termin{}{}{\Phi} = \{ \termin{}{}{A} \st A \in \Phi \}$,\\
 $\prem{}{}{\Phi} = \bigcup_{A \in \Phi}  \prem{}{}{A}$, and 
 $\premterm{}{}{\Phi} = \{A \in \prem{}{}{\Phi} \st \termin{}{}{A} \in (\termin{}{}{\Phi} \cup \{\bot\}) \}$.
\end{definition}

\begin{definition}\label{def:higher_complete-set}
A set $\Phi$ of formulas is \emph{saturated} if for every formula $A \in \premterm{}{}{\Phi}$ we have $\prem{}{}{A} \cap \Phi \neq \emptyset$.

\end{definition}

It will turn out that, by applying an iterative process, it is possible to construct saturated sets of formulas starting from finite sets of formulas
 which cannot be proved by a sequent of the stack calculus. The forthcoming Lemmas \ref{lem:higher_unprovable-property-1} and \ref{lem:higher_unprovable-property-2} are the fundamental ingredients for such construction.
 We write $\ntystk{t}{-}{A}{- \asm \seq B}$ to express the fact that there are no variables $\seq \gb$ and no term $M$ such that
 $\tystk{t}{M}{A}{\seq \gb \asm \seq B}$.

\begin{lemma}\label{lem:higher_unprovable-property-1}
Let $\Phi = \{B_0,\ldots,B_n\}$ be a finite set of formulas and suppose $\ntystk{t}{-}{B_0}{- \asm B_1,\ldots,- \asm B_n}$. Then $\premterm{}{}{\Phi} \cap \At = \emptyset$.
\end{lemma}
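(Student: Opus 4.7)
I would proceed by contrapositive. Suppose there exists an atomic formula $F$ with $F \in \premterm{}{}{\Phi}$; by definition this means $F \in \prem{}{}{\Phi}$ and $F \in \termin{}{}{\Phi} \cup \{\bot\}$. The plan is to construct explicit $\seq\gb$ and a term $M$ such that $\tystk{t}{M}{B_0}{\gb_1 \asm B_1, \ldots, \gb_n \asm B_n}$ is derivable, contradicting the hypothesis. The key idea is that the two membership conditions on $F$ yield respectively an $F$-typed term and an $F$-typed stack; rule $[\aprule]$ combines them into a well-typed process, and a final $[\bdrule\ga]$ at type $B_0$ closes the derivation.

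Fix a fresh stack variable $\ga$, to be $\bd$-bound at type $B_0$ at the very end; internally it is treated as having that type via $[\varrule]$. From $F \in \prem{}{}{\Phi}$ pick $j \in \{0,\ldots,n\}$ and $i \geq 1$ with $B_j = E_1 \to \cdots \to E_k \to G$ and $E_i = F$. Set $\pi := \ga$ if $j = 0$ and $\pi := \gb_j$ otherwise; then $(i-1)$ applications of $[\cdrrule]$ followed by one $[\carrule]$ yield a derivation of $\tystk{t}{\cadr{\pi}{i-1}}{F}{\ga \asm B_0, \gb_1 \asm B_1, \ldots, \gb_n \asm B_n}$. For the stack side: if $F = \bot$ take $\varpi := \nil$, typed by $[\nilrule]$; otherwise $F = \termin{}{}{B_l}$ for some $l \in \{0,\ldots,n\}$ with $B_l = D_1 \to \cdots \to D_m \to F$, and setting $\pi' := \ga$ if $l = 0$ and $\pi' := \gb_l$ otherwise, the stack $\varpi := \cddr{\pi'}{m}$ has type $F$ by $m$ applications of $[\cdrrule]$.

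Applying $[\aprule]$ to these two derivations produces $\tystk{p}{\cadr{\pi}{i-1} \ap \varpi}{}{\ga \asm B_0, \gb_1 \asm B_1, \ldots, \gb_n \asm B_n}$, and a final $[\bdrule\ga]$ gives $\tystk{t}{\bd\ga.\,\cadr{\pi}{i-1} \ap \varpi}{B_0}{\gb_1 \asm B_1, \ldots, \gb_n \asm B_n}$, contradicting the assumption. The only real delicacy is bookkeeping on indices and on the two independent choices of variable ($\pi = \ga$ or $\gb_j$, and $\pi' = \ga$ or $\gb_l$, with all four sub-cases $j,l \in \{0\}$ or $j,l > 0$ handled uniformly by the definition); the derivation itself is a short linear chain of eliminations capped off by a $\mathsf{cut}$ and a $\bd$-introduction, so no induction is needed.
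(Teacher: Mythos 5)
Your proof is correct and takes essentially the same route as the paper's: argue by contrapositive and exhibit an explicit well-typed witness, obtained by extracting the atomic formula $F$ as a term of type $F$ via $\cadr{\cdot}{i-1}$ from the formula having $F$ as a premiss, pairing it with a stack of type $F$ (either $\nil$ when $F=\bot$ or $\cddr{\cdot}{m}$ of the formula whose terminal is $F$), cutting the two together and $\bd$-abstracting at $B_0$. The only (cosmetic) difference is that the paper routes the terminal stack through an extra $\bd\epsilon$-redex applied to $\gb_k$, whereas you use the context variable directly; both derivations typecheck.
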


\begin{proof}
We prove the contrapositive statement. Supposing $A \in \premterm{}{}{\Phi} \cap \At$, we distinguish two possible cases: (1) and (2).
 We write $\seq \gb \asm \seq B$ for the context $\gb_1 \asm B_1,\ldots,\gb_n \asm B_n$. Let $\epsilon$ be a fresh variable.
\begin{enumerate}[(1)]
\item There exist some $j,k \in [0,n]$ such that $B_j =C_1' \to \cdots \to C_i' \to \cdots \to C_{m'}' \to G'$,\\
 $B_k = C_1'' \to \cdots \to C_{m''}'' \to G''$, and $C_i' = G'' = A$. Then
 $\tystk{t}{\bd\gb_0.(\bd\epsilon.\cadr{\gb_j}{i-1}\ap\cddr{\epsilon}{m''})\ap\gb_k}{B_0}{\seq \gb\asm \seq B}$.
\item There exist some $j \in [0,n]$ such that $B_j = C_1' \to \cdots \to C_i' \to \cdots \to C_{m'}' \to G'$, and $C_i' = \bot = A$.
 Then $\tystk{t}{\bd\gb_0.(\bd\epsilon.\cadr{\gb_j}{i-1}\ap\nil)\ap\gb_k}{B_0}{\seq \gb\asm \seq B}$.
\end{enumerate}
\end{proof}

\begin{lemma}\label{lem:higher_unprovable-property-2}
Let $\Phi = \{B_0,\ldots,B_n\}$ be a finite set of formulas and suppose $\ntystk{t}{-}{B_0}{- \asm B_1,\ldots,- \asm B_n}$. Then for every $A \in \premterm{}{}{\Phi}$ there exists a formula $C \in \prem{}{}{A}$ such that \mbox{$\ntystk{t}{-}{B_0}{- \asm B_1,\ldots,- \asm B_n,- \asm C}$}.
\end{lemma}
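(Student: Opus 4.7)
The plan is to argue by contrapositive. Fix $A \in \premterm{}{}{\Phi}$ and assume, toward a contradiction, that for every $C \in \prem{}{}{A}$ there exists a term $M_C$ with $\tystk{t}{M_C}{B_0}{\gb_1 \asm B_1, \ldots, \gb_n \asm B_n, \ga_C \asm C}$; I will then construct a term $M$ with $\tystk{t}{M}{B_0}{\gb_1 \asm B_1, \ldots, \gb_n \asm B_n}$, contradicting the standing hypothesis. To set things up, write $A = C_1 \to \cdots \to C_k \to G$ with $G = \termin{}{}{A}$. By Lemma \ref{lem:higher_unprovable-property-1}, the standing hypothesis forces $\premterm{}{}{\Phi} \cap \At = \emptyset$, so $A$ is non-atomic and $k \geq 1$; in particular $\prem{}{}{A} = \{C_1, \ldots, C_k\}$ is non-empty. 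Since $A \in \prem{}{}{\Phi}$, pick $j \in [0,n]$ and a decomposition $B_j = D_1 \to \cdots \to D_i \to \cdots \to D_m \to G'$ with $D_i = A$, and recall $G \in \termin{}{}{\Phi} \cup \{\bot\}$.

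The term $M$ will live under an outermost $\bd \gb_0$, making $\gb_0$ available as a stack of type $B_0$. First I would exhibit a stack $\pi^*$ of type $G$ using the variables in context: take $\pi^* = \nil$ if $G = \bot$, and take $\pi^* = \cddr{\gb_r}{\rrank{B_r}}$ if $G = \termin{}{}{B_r}$ for some $r \in [0,n]$ (iterating $[\cdrrule]$ on $\gb_r$ exactly $\rrank{B_r}$ times strips all its premisses and leaves type $G$). Next, for each $l \in [1,k]$ I would wrap $M_{C_l}$ in continuation-passing style as
$$N_l \ass \bd \ga_l . (M_{C_l} \ap \gb_0),$$
whose body is a well-typed process by $[\aprule]$ (since $M_{C_l} : B_0$ and $\gb_0$ is a stack of type $B_0$), so that $[\bdrule{\ga_l}]$ gives $N_l$ type $C_l$. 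Finally, set
$$M \ass \bd \gb_0 . \cadr{\gb_j}{i-1} \ap N_1 \at \cdots \at N_k \at \pi^*.$$

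Typing $M$ is then a routine derivation from the axioms: iterated $[\cdrrule]$ followed by $[\carrule]$ gives $\cadr{\gb_j}{i-1} : A$; iterated $[\atrule]$ gives $N_1 \at \cdots \at N_k \at \pi^* : A$; one $[\aprule]$ produces a process, and the outermost $[\bdrule{\gb_0}]$ yields $\tystk{t}{M}{B_0}{\gb_1 \asm B_1, \ldots, \gb_n \asm B_n}$, the desired contradiction. The one non-routine step, and the main obstacle, is the existence of $\pi^*$: this is exactly what the condition $\termin{}{}{A} \in \termin{}{}{\Phi} \cup \{\bot\}$ in the definition of $\premterm{}{}{\Phi}$ is designed to guarantee. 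Once $\pi^*$ is in hand, the continuation-passing wrappers $N_l$ handle the absence of genuine $C_l$-terms by turning the assumed proofs of ``$B_0$ from a $C_l$-continuation'' into stand-ins for $C_l$-terms, and the rest of the construction mirrors the trick already used in Lemma \ref{lem:higher_unprovable-property-1}.
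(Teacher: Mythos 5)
Your proposal is correct and follows essentially the same route as the paper's proof: contrapositive, non-atomicity of $A$ via Lemma \ref{lem:higher_unprovable-property-1}, the continuation-passing wrappers $\bd\gc_i.M_i\ap\gb_0$ standing in for terms of the premiss types, and a terminal stack ($\nil$ or an iterated $\mathsf{cdr}$) supplied by the defining condition of $\premterm{}{}{\Phi}$. The only difference is cosmetic: the paper routes $\gb_h$ and $\gb_k$ through fresh variables $\gd,\epsilon$ via explicit cuts, whereas you use the context variables $\gb_j,\gb_r$ directly, which yields a $\beta$-equivalent witness.
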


\begin{proof}
We prove the contrapositive statement. To this end, suppose $A \in \premterm{}{}{\Phi}$ is a formula that is a counterexample to the conclusion of the statement. First note that $\prem{}{}{A} \neq \emptyset$, otherwise $A \in \At$, in contradiction with Lemma \ref{lem:higher_unprovable-property-1}.
 Therefore $A = C_1 \to \cdots \to C_m \to G$, with $m\geq 1$. We write $\seq \gb \asm \seq B$ for the context $\gb_1 \asm B_1,\ldots,\gb_n \asm B_n$.

By our assumption for every $i=1,\ldots,m$ ($m \geq 1$) there exist $M_i,\gc_i$ such that $\tystk{t}{M_i}{B_0}{\seq \gb \asm \seq B,\gc_i \asm C_i}$
 and thus we derive $\tystk{t}{\bd\gc_i.M_i\ap\gb_0}{C_i}{\gb_0 \asm B_0,\seq \gb \asm \seq B}$ for each $i=1,\ldots,m$.
 Moreover, since $A \in \premterm{}{}{\Phi}$, there are two cases:
\begin{enumerate}[(1)]
\item there exist some $k,h \in [0,n]$ such that $B_h = {C}_1' \to \cdots \to {C}_j' \to \cdots \to {C}_{m'}' \to G'$,\\
 $B_k = {C}_1'' \to \cdots \to {C}_{m''}'' \to G''$,
 $A = {C}_j'$, and $G = G''$.
\item $G = \bot$ and there exist some $h \in [0,n]$ such that $B_h = {C}_1' \to \cdots \to {C}_j' \to \cdots \to {C}_{m'}' \to G'$ and $A = {C}_j'$.
\end{enumerate}

Let $\epsilon$ be a fresh variable. In both cases (1) and (2) there exists a stack $\pi$ such that $\tystk{s}{\pi}{G}{\epsilon\asm B_k}$
 is derivable, where $\pi$ is either $\nil$ or $\cddr{\epsilon}{\rrank{B_k}}$.

Let $\gc_1,\ldots,\gc_m,\gd$ be fresh variables and let $\varpi \ass (\bd\gc_1.M_1\ap\gb_0)\at\ldots\at(\bd\gc_m.M_m\ap\gb_0)\at\pi$.
 Then we finally derive $\tystk{t}{\bd\gb_0.(\bd\gd.(\bd\epsilon.\cadr{\gd}{j-1}\ap\varpi)\ap\gb_k)\ap \gb_h}{B_0}{\seq \gb \asm \seq B}$.
\end{proof}

The \emph{complexity} of a formula $A$ is the total number of implications and atomic sub-formulas occurring in $A$. The formulas of complexity one
 are exactly the atomic ones.

\begin{lemma}\label{lem:higher_sat-false-interpretation}
Let $\Phi$ be a saturated set of formulas. Then there exists a $\bB$-interpretation $\cI$ such that $\tval{A}_{\cI}= \emptyset$, for all $A \in \Phi$.
\end{lemma}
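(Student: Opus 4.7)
\textit{Proof plan.} The plan is to construct an explicit atomic $\bB$-interpretation out of $\Phi$ and then verify the claim by a mutual induction on the complexity of formulas. First I would define $\cI : \At \to \Pss{\mbz}$ by
$$\cI(G) = \mbz \text{ if } G \in \termin{}{}{\Phi} \cup \{\bot\}, \qquad \cI(G) = \emptyset \text{ otherwise.}$$
Both $\emptyset$ and $\mbz$ are saturated subsets of $\mbz$ and $\cI(\bot) = \mbz$, so this is a legal $\bB$-interpretation.

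Since $\bbot = \emptyset$, one has $\rea{X} \neq \emptyset$ iff $X = \emptyset$. Unfolding $\fval{A \to B}_{\cI} = \fval{A}_{\cI} \Rightarrow \fval{B}_{\cI}$ and inducting on $m$, I would then derive the key characterization
$$\fval{B_1 \to \cdots \to B_m \to G}_{\cI} \neq \emptyset \iff \cI(G) \neq \emptyset \text{ and } \fval{B_i}_{\cI} = \emptyset \text{ for each } i=1,\ldots,m.$$
The left-to-right direction uses that if some $\fval{B_i}_{\cI}$ is non-empty then $\rea{\fval{B_i}_{\cI}} = \emptyset$, which empties $\fval{B_i \to \cdots \to G}_{\cI}$ and propagates outward; the converse builds a witness $\varpi$ by iterating reflexivity of $\succ_{\mathsf{s}}$ and the closure of $\mbz$ under $\at$.

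The heart of the argument is then a simultaneous induction on the complexity of $A$ of the following two statements: \textbf{(*)} if $A \in \Phi$, then $\fval{A}_{\cI} \neq \emptyset$; \textbf{(**)} if $A \in \prem{}{}{\Phi}$, then $\fval{A}_{\cI} = \emptyset$. For \textbf{(*)}, write $A = B_1 \to \cdots \to B_m \to G \in \Phi$: then $G \in \termin{}{}{\Phi}$ so $\cI(G) = \mbz \neq \emptyset$, each $B_i \in \prem{}{}{\Phi}$ has lesser complexity and IH \textbf{(**)} gives $\fval{B_i}_{\cI} = \emptyset$, and the characterization concludes. For \textbf{(**)}, write $A = C_1 \to \cdots \to C_k \to G' \in \prem{}{}{\Phi}$: either $G' \notin \termin{}{}{\Phi} \cup \{\bot\}$, so $\cI(G') = \emptyset$ and the characterization immediately gives $\fval{A}_{\cI} = \emptyset$; or $A \in \premterm{}{}{\Phi}$, in which case the saturation hypothesis on $\Phi$ supplies some $C \in \prem{}{}{A} \cap \Phi$ of strictly smaller complexity, for which IH \textbf{(*)} gives $\fval{C}_{\cI} \neq \emptyset$; since $C$ is one of the $C_i$, the characterization again forces $\fval{A}_{\cI} = \emptyset$.

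Finally, for every $A \in \Phi$ one has $\tval{A}_{\cI} = \rea{\fval{A}_{\cI}} = \emptyset$ as required, using $\fval{A}_{\cI} \neq \emptyset$ and $\bbot = \emptyset$. The main hurdle is identifying the right pair of mutually recursive statements and pinpointing where the saturation hypothesis is used: it enters exactly once, in the second case of \textbf{(**)}, to supply a premise of $A$ that already lies in $\Phi$ — whose non-empty falsehood value is precisely what forbids $\fval{A}_{\cI}$ itself from being non-empty. Without saturation one would not be able to distinguish the formulas of $\Phi$ from the ``dual'' formulas in $\prem{}{}{\Phi}$ whose terminal accidentally lands in $\termin{}{}{\Phi} \cup \{\bot\}$.
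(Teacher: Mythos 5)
Your proof is correct and follows essentially the same route as the paper: the same counter-model $\cI$ (making an atom false exactly when it is a terminal of $\Phi$ or $\bot$), the same induction on formula complexity, and the same single use of saturation to find a premise of $A$ already in $\Phi$. The only difference is presentational — you work with falsehood values via an explicit characterization of $\fval{B_1\to\cdots\to B_m\to G}_{\cI}$ and split the induction into the mutual pair (*)/(**), whereas the paper states one inductive claim about $\tval{A}_{\cI}$ for $A\in\Phi$ and handles the premises (your (**)) inline as its step (1).
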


\begin{proof}
The case in which $\Phi = \emptyset$ is trivial, so for the rest of the proof we assume $\Phi \neq \emptyset$. We define a
 $\bB$-interpretation $\cI$ as follows:
\[
\cI(G) =
\begin{cases}
\emptyset  & \text{ if } G \in \termin{}{}{\Phi} \\
\KTer{t}   & \text{ otherwise } \\
\end{cases}
\]
We now prove that $\tval{A}_{\cI} = \emptyset$, for all $A \in \Phi$. The proof is by induction on the complexity of formulas.

\noindent Suppose $A \in \At$. If $A=\bot$ the result is obvious; otherwise, since $A  \in \termin{}{}{\Phi}$, we have $\tval{A}_{\cI} = \emptyset$.

\noindent Suppose $A = C_1 \to \cdots \to C_m \to G$ (with $m\geq 1$). We now prove that \\
\begin{itemize}
\item[ ] (1)\quad $\tval{C_1}_{\cI}= \cdots = \tval{C_m}_{\cI} = \KTer{t}$; \qquad \qquad \qquad (2)\quad $\tval{G}_{\cI} = \emptyset$.
\end{itemize}
The items (1) and (2) together yield $\tval{A}_\cI = \emptyset$.
\begin{enumerate}[(1)]
\item For $C_i \in \prem{}{}{A}$ we distinguish two cases.\\
\noindent Suppose $C_i \not\in \premterm{}{}{\Phi}$. Then $\termin{}{}{C_i}$ is not a terminal of a formula in $\Phi$. By definition of
 $\cI$ we have $\tval{\termin{}{}{C_i}}_{\cI} = \KTer{t}$. We conclude observing that
 $\tval{C_i}_{\cI} \supseteq \tval{\termin{}{}{C_i}}_{\cI} = \KTer{t}$.\\
\noindent Suppose $C_i \in \premterm{}{}{\Phi}$. Then, by saturation of $\Phi$, $C_i = C_1' \to \cdots \to C_{m'}' \to G'$ (with $m'\geq 1$) and there
 exists $j \in [1,m']$ such that $C_j' \in \Phi$. Since $C_j'$ has strictly lower complexity than $A$, by induction hypothesis
 $\tval{C_j'}_{\cI} = \emptyset$. This implies $\tval{C_i}_{\cI}= \KTer{t}$.
\item Since $G \in \termin{}{}{\Phi} \cup \{\bot\}$, evidently $\tval{G}_{\cI} = \emptyset$ by the definition of the interpretation $\tval{\cdot}_{\cI}$.
\end{enumerate}
\end{proof}

Next we give the second main theorem of this section, concerning completeness.
 The idea of its proof is the \emph{counter-model construction}, typical of Smullyan's analytic tableaux \cite{Smullyan68}.

\begin{theorem}[Completeness]\label{thm:higher-completeness}
Let $A$ be a formula and let $\seq B$ be a sequence of formulas. If $\neg B_1, \ldots, \neg B_n \vDash A$, then there exist $M$ and
 $\seq \gb$ such that $\tystk{t}{M}{A}{\seq \gb \asm \seq B}$ is provable.
\end{theorem}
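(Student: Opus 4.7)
The plan is to prove the contrapositive. Assume there are no $M$ and $\seq \gb$ such that $\tystk{t}{M}{A}{\seq \gb \asm \seq B}$ is derivable; I would then exhibit a $\bB$-interpretation $\cI$ refuting $\neg B_1,\ldots,\neg B_n \vDash A$. Set $B_0 \ass A$ and $\Phi_0 \ass \{B_0, B_1, \ldots, B_n\}$; the hypothesis reads $\ntystk{t}{-}{B_0}{- \asm B_1, \ldots, - \asm B_n}$, exactly the shape required to apply Lemmas~\ref{lem:higher_unprovable-property-1} and~\ref{lem:higher_unprovable-property-2}.

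The heart of the argument is the iterative construction of a finite saturated superset $\Phi \supseteq \Phi_0$ for which the corresponding unprovability still holds. At stage $i$, if $\Phi_i$ is already saturated, stop. Otherwise pick a witness of non-saturation, that is, some $C \in \premterm{}{}{\Phi_i}$ with $\prem{}{}{C} \cap \Phi_i = \emptyset$; invoke Lemma~\ref{lem:higher_unprovable-property-2} on $\Phi_i$ and $C$ to obtain some $D \in \prem{}{}{C}$ keeping $B_0$ unprovable once the context is extended by $-\asm D$, and set $\Phi_{i+1} \ass \Phi_i \cup \{D\}$. Since $D \in \prem{}{}{C}$ while $\prem{}{}{C} \cap \Phi_i = \emptyset$, we have $D \notin \Phi_i$, so the chain strictly grows. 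Moreover $D$ is a strict sub-formula of $C$, which is in turn a sub-formula of some element of $\Phi_i$, so by a trivial induction each $\Phi_i$ sits inside the finite set of sub-formulas of $\Phi_0$. The chain therefore stabilizes after finitely many steps at a saturated $\Phi$.

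Applying Lemma~\ref{lem:higher_sat-false-interpretation} to this $\Phi$ yields a $\bB$-interpretation $\cI$ with $\tval{C}_{\cI} = \emptyset$ for every $C \in \Phi$. In particular $\tval{B_i}_{\cI} = \emptyset$ for $i = 1,\ldots,n$, equivalently $\tval{\neg B_i}_{\cI} = \KTer{t}$, while $\tval{A}_{\cI} = \tval{B_0}_{\cI} = \emptyset$. Hence $\cI$ refutes $\neg B_1,\ldots,\neg B_n \vDash A$, contradicting the hypothesis.

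The only real obstacle is confirming termination of the saturation procedure: it reduces to noting that the formulas produced by Lemma~\ref{lem:higher_unprovable-property-2} are always proper sub-formulas of formulas already present in $\Phi_i$, so the construction never escapes the finite closure of $\Phi_0$ under sub-formulas. Everything else is essentially a direct plug-in of the three auxiliary lemmas assembled above.
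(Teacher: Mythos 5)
Your proof is correct and follows essentially the same route as the paper's: contrapositive, iterative saturation of $\{A,B_1,\ldots,B_n\}$ driven by Lemma~\ref{lem:higher_unprovable-property-2}, then the counter-model supplied by Lemma~\ref{lem:higher_sat-false-interpretation}. The only (harmless, arguably beneficial) deviations are that you add a single premiss per step rather than one for each element of $\premterm{}{}{\Phi_i}$ at once --- which makes the unprovability invariant easier to carry from one stage to the next --- and that you spell out the termination argument via the finite subformula closure of $\Phi_0$, which the paper leaves implicit by taking the union $\bigcup_{n\geq 0}\Phi_n$.
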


\begin{proof}
We proceed to prove the contrapositive statement. Suppose $\ntystk{t}{-}{A}{- \asm \seq B}$. Then we can construct a saturated set $\Phi$ of formulas
 containing $\{A,B_1,\ldots,B_n\}$ as $\Phi \ass \bigcup_{n \geq 0} \Phi_n$, where the family $\{\Phi_n\}_{n \geq 0}$ is
 inductively defined as follows:
\begin{itemize}
\item $\Phi_0 \ass \{A,B_1,\ldots,B_n\}$;
\item If $\premterm{}{}{\Phi_{n}} = \emptyset$, then we define $\Phi_{n+1} \ass \Phi_{n}$.
 If $\premterm{}{}{\Phi_{n}} = \{C_1,\ldots,C_k\} \neq \emptyset$, by Lemma \ref{lem:higher_unprovable-property-2} for each $C_i$ there
 exists a formula $D_i \in \prem{}{}{C_i}$ such that \mbox{$\ntystk{t}{-}{A}{- \asm B_1,\ldots,- \asm B_n,- \asm D_i}$}.
 Let $\Psi_n = \{D_1,\ldots,D_k\}$, where each $D_i$ is the leftmost premiss of $C_i$ having the property that
 $\ntystk{t}{-}{A}{- \asm B_1,\ldots,- \asm B_n,- \asm D_i}$. Then we define $\Phi_{n+1} \ass \Phi_{n} \cup \Psi_n$.
\end{itemize}
By construction $\Phi$ is a saturated set of formulas containing $\{A,B_1,\ldots,B_n\}$. Finally applying Lemma \ref{lem:higher_sat-false-interpretation} we obtain some $\cI$ such that
 $\tval{B_1}_{\cI} =\cdots= \tval{B_n}_{\cI}= \tval{A}_{\cI}=\emptyset$, meaning that $\neg B_1, \ldots,\neg B_n \nvDash A$.
\end{proof}

Of course Theorem \ref{thm:higher-completeness} implies that every classical propositional tautology (of the $\{\to,\bot\}$-fragment) is provable
 by the type derivation of a term.

\section{The Krivine machine for stack calculus}\label{sec:Krivine-machine}

In the present section we sketch the definition of a Krivine machine that executes the terms of stack calculus.
 Similar machines have been defined by de Groote \cite{Groote98}, Laurent \cite{Laurent03note}, Reus and Streicher
 \cite{Stre98} for the \lmu-calculus. Using this machine we show how to encode control mechanisms like \emph{label/resume} and \emph{raise/handle}
 in the stack calculus.

In order to define the states of the machine, we need the following mutually inductive definitions.
 A \emph{stack closure} is a pair $p=\Ksclos{\pi}{e}$ consisting of a stack $\pi$ and an environment $e$;
 a \emph{term closure} is a pair $m=\Ktclos{M}{e}$ consisting of a stack $\pi$ and an environment $e$;
 an \emph{environment} is a partial function (with finite domain) from the set of stack variables to the set of stack closures.
 We write $\envup{e}{\ga}{p}$ for the environment $e'$ which assumes the same values as $e$ except at most on $\ga$, where $e'(\alpha)=p$.

A \emph{state} is a pair $\Kstate{m}{p}$ and the machine consists of the following (deterministic) transitions between states:
\[
\begin{array}{lcll}
\Kstate{\Ktclos{N}{e}}{p}  & \transition & \Kstate{\Ksclos{\cadr{\pi'}{n}}{e'}}{p}
       & \text{ if } \cadr{\ga}{n}  \text{ is the } \mslabelto{\mathsf{car},\mathsf{cdr}} \text{-normal form of } N \text{ and } e(\ga) = \Ksclos{\pi'}{e'} \\
\Kstate{\Ktclos{N}{e}}{p} & \transition & \Kstate{\Ktclos{M}{e'}}{\Ksclos{\pi}{e'}} 
       & \text{ if } \bd\ga.M\ap\pi \text{ is the } \mslabelto{\mathsf{car},\mathsf{cdr}} \text{-normal form of } N \text{ and } e' = \envup{e}{\ga}{p}
\end{array}  
\]
We let $\transitions$ be the reflexive and transitive closure of the relation $\transition$. Consider a state $\Kstate{\Ktclos{M}{e}}{p}$. The closure $p$ is the
 current context of evaluation of $M$; the next state may discard $p$ and restore a context appeared in the past. The environment $e$ is the current state of the
 memory: it takes into account all side effects caused by the previous stages of computation. The term $M$ is said to be in \emph{execution position} and it is the
 current program acting on $p$ evaluated in $e$. A \emph{computation} is a sequence of states sequentially related by the transition rules.

To explain how the stack calculus achieves the control of the execution flow, we define label/resume and raise/handle instructions and show that the machine soundly
 executes them. We set
\[
\begin{array}{lr}
\lab{\epsilon}{M} \ass \bd\gb.(\bd{\epsilon}.M \ap \gb)\ap(\bd{\gd}.\cadr{\gd}{0}\ap\gb) \at \gb & \text{ with } \gb \not\in \FV(M) \\
\res{\epsilon}{M} \ass \bd\gc.\cadr{\epsilon}{0}\ap N\at\gc & \text{ with } \epsilon,\gc \not\in \FV(M) \\
\delthrow{\epsilon}{M} \ass \bd\gc.\cadr{\epsilon}{0}\ap M\at\nil & \text{ with } \epsilon,\gc \not\in \FV(M) \\
\deltrycatch{\epsilon}{M}{N} \ass \bd\gb.(\bd\epsilon.M\ap\gb)\ap (\bd\gd.N\ap\cadr{\gd}{0}\at\gb)\at\nil &  \text{ with } \gb\not\in (\FV(M) \cup \FV(N)),\ \gd \not \in \FV(N) 
\end{array}
\]

We now discuss briefly and informally how the machine executes the above instructions.

Suppose to start the machine in a state $S = \Kstate{\Ktclos{\lab{\epsilon}{M}}{e_0}}{p_0}$. If no term $\res{\epsilon}{N}$ ever reaches the execution
 position, then the computation starting at $S$ is equivalent to that starting at $S' = \Kstate{\Ktclos{M}{e_0}}{p_0}$.
 Otherwise $S \transitions^n \Kstate{\Ktclos{\bd\gc.\cadr{\epsilon}{0}\ap N\at\gc}{e_n}}{p_n} \transitions^2 \Kstate{\Ktclos{N}{e_{n+1}}}{p_{n+2}}$,
 and we notice that the computation starting at $\Kstate{\Ktclos{\res{\epsilon}{N}}{e_n}}{p_n}$ is equivalent to that starting at $\Kstate{\Ktclos{N}{e_{n+1}}}{p_{n+2}}$.

Suppose to start the machine in a state $S = \Kstate{\Ktclos{\deltrycatch{\epsilon}{M}{N}}{e_0}}{p_0}$. If no term $\delthrow{\epsilon}{M'}$ ever
 reaches the execution position, then the computation starting at $S$ is equivalent to that starting at $S' = \Kstate{\Ktclos{M}{e_0}}{p_0}$.
 Otherwise $S \transitions^n \Kstate{\Ktclos{\bd\gc.\cadr{\epsilon}{0}\ap M'\at\nil}{e_n}}{p_n} \transitions^3 \Kstate{\Ktclos{N}{e_{n+2}}}{\Ksclos{\cadr{\gd}{0}\at\gb}{e_{n+2}}}$
 and we can see that the exception handler $N$ goes on with the computation, and the value $M'$ returned by the exception is at use of $N$, since it is stored in the
 in the current environment $e_{n+2}$ in a cell that is present in the current evaluation context.

We conclude remarking that all the above constructions can be typed by derived rules. Informally one may assert that Theorem \ref{thm:soundness} and Theorem \ref{thm:subj-red-stk},
 together, ensure that the execution of well-typed term always ensures that all the ``resume'' and ``raise'' instructions are always handled correctly.

\section{Denotational semantics of stack calculus}\label{sec:den-sem}

Girard's \emph{correlation spaces} \cite{Gir91} are (one of) the first denotational models of Classical Logic: they refine
 coherence spaces \cite{Gir86} with some additional structure. Intuitively, these richer objects come with the information required to interpret structural
 rules (weakening and contraction) on the right-hand side of sequents in classical sequent calculus. Girard's construction
 hints that Classical Logic may be encoded into Linear Logic, a result achieved by Danos et al. \cite{DanosJS95}
 via a dual linear decomposition of classical implication. In \cite{Stre98} the authors interpret the \lmu-calculus in the Cartesian closed category of ``negated domains'', i.e. the full subcategory of $\CPO$
 determined by the objects of the form $R^A$, where $A$ is a predomain and $R$ is some fixed domains of ``responses''.
 The category of negated domains is a particular \emph{category of continuations} \cite{Stre93} and categories
 of continuations are \emph{complete} \cite{Stre97} for the \lmu-calculus, in the sense that every equational theory for 
 \lmu-calculus is given by the kernel relation of the interpretation in some category of continuations.
 Selinger \cite{Selinger01} gives a general presentation in terms of \emph{control categories}, which are easily seen to subsume
 categories of continuations. However via a categorical structure theorem he also shows that every control
 category is equivalent to a category of continuations. This structure theorem implies the soundness and completeness
 of the categorical interpretation of the \lmu-calculus with respect to a natural CPS semantics.

In brief, a control category is a Cartesian closed category $(\bC,\wi,\top,\Rightarrow)$ which is also a symmetric premonoidal category
 $(\bC,\pa,\bot)$. The binoidal functor $\pa$ distributes over $\wi$ and there is a natural isomorphism $s_{A,B,C}: B^A \pa C \to (B \pa C)^A$ in $A,B$ and $C$ satisfying some
 coherence conditions. Selinger distinguishes a subcategory $\foc{\bC}$ of $\bC$, called the \emph{focus} of $\bC$, which have the same objects as
 $\bC$ but fewer arrows. On $\foc{\bC}$ the functor $\pa$ restricts to a coproduct. It is very important to remark that in any control category $\bC$
 there exists an isomorphism $\ctrliso:\bC(\top,B\pa A) \cong \foc{\bC}(\bot^A,B)$ natural in central $B$ (see \cite{Selinger01} for the details).
 If $\bC$ is a control category we map falsity to the object $\bot$ and set $\tval{A \to B} = \bot^{\tval{A}} \pa \tval{B}$; a context $\Delta = \seq \ga\asm\seq A$ is mapped to
 $\tval{\Delta} = \tval{A_1} \pa \cdots \pa \tval{A_n}$. Then the judgements are interpreted as morphisms
 $\Int{\tystk{s}{\pi}{A}{\Delta}}: \tval{A} \to \tval{\Delta}$,
 $\Int{\tystk{t}{M}{A}{\Delta}}:   \bot^{\tval{A}} \to \tval{\Delta}$ and
 $\Int{\tystk{p}{P}{}{\Delta}}:    \top \to \tval{\Delta}$, using the coproduct structure and the isomorphism $\ctrliso$.
 The above intepretation is \emph{sound}, in the sense that it is invariant under $\labelto{\s\eta}$-reduction of expressions.

Very interesting is the work of Laurent and Regnier \cite{Laurent03} which shows in detail how to extract a control category
 out of a categorical model of MALL. This constribution gives a general framework under which falls the correlation spaces model construction by Girard and
 at the same time constitutes the categorical counterpart of Danos\textendash Joinet\textendash Schellinx's \cite{DanosJS95} call-by-name encoding
 of Classical logic into Linear Logic.

A $\ast$-autonomous category is a symmetric monoidal category with two monoidal structures $(\bC,\otimes,\teid)$ and $(\bC,\pa,\bot)$ possessing
 a \emph{dualizing} endofunctor $\ort{(\cdot)}$ which maps $f:A\to B$ to $\ort f: \ort B \to \ort A$.

Let $\bC$ be a $\ast$-autonomous category. When the forgetful functor from the category $\Mon{\bC}$ (of $\pa$-monoids and $\pa$-monoid morphisms)
 to the category $\bC$ has a right adjoint, then $\bC$ is a \emph{Lafont category}. We recall that the co-Kleisli category $\mbbK_{\bC}$ of a monoidal
 category $\bC$ via a comonad $(!,\dig,\der)$ has the same objects as $\bC$ and $\mbbK_{\bC}(A,B) = \bC(!A,B)$; the composition of morphisms
 is defined using the monad structure (see \cite{Mellies09}).
 One of the main results of \cite{Laurent03} is that if $\bC$ is a $\ast$-autonomous Lafont category with finite products, then then the co-Kleisli category
 $\mbbK_{\bC'}$ of the full-subcategory $\bC'$ of $\bC$ whose objects are the $\pa$-monoids is a control category.

\subsection{A simple interpretation of stack calculus}

Inspired by Laurent and Regnier's work \cite{Laurent03} we give a minimal framework in which the stack calculus can be soundly interpreted.
 The absence of the $\lambda$-abstraction, allows us to focus on the minimal structure required to interpret Laurent's Polarized
 Linear Logic \cite{Laurent03b} and to use it to interpret the stack calculus.

Let $\bC$ be a $\ast$-autonomous category. We denote by $\rho_A: A \to A \pa \bot$, $\lambda_A: A \to \bot \pa A$, $\alpha$, $\gamma$ and $\tau$ the usual natural isomorphisms related to
 the monoidal structure of $(\bC,\pa,\bot)$. 

A \emph{linear category} is a symmetric monoidal category together with a symmetric monoidal comonad $((\bang,\mon),\dig,\der)$ such that there are
 monoidal natural transformations with components $\bcouni_A:\ \bang A \to \bon$ and $\bcomul_A:\ \bang A \to \bang A \otimes \bang A$
 which are coalgebra morphisms and make each free $\bang$-coalgebra a commutative $\otimes$-comonoid $(\bang A,\comul_A,\couni_A)$; moreover
 $\dig_A:\ \bang A \to \bang\bang A$ is a comonoid morphism, for every object $A$.


In the sequel we let $\bC$ be a $\ast$-autonomous linear category, so that by duality we can turn the above definition in terms of
 a monad $((?,\mon),\dig,\der)$, $?$-algebras and $\pa$-monoids. In this case there are monoidal natural transformations with components $\uni_A: \bot \to ?A$ and $\mul_A: ?A\pa ?A \to ?A$
 which are $?$-algebra morphisms and make each free $?$-algebra a commutative $\pa$-monoid $(?A,\bmul_{A},\buni_{A})$;
 $\dig_A: ??A \to ?A$ is a monoid morphism, for every object $A$.
 Under these hypotheses all $?$-algebras $A$, being retract of a the free algebra $?A$, have a multiplication $\mul_{A}$, and a unit $\uni_{A}$
 (see \cite{Mellies09} for further details). The category $\bC^?$ of Eilenberg-Moore algebras is symmetric monoidal, with (co)tensor product of $(A,\alg_A)$, $(B,\alg_B)$ given by
 $(A \pa B,(\alg_A\pa\alg_B)\circ \mon^2)$ and unit given by $(\bot,\mon^1)$.
 The $\ast$-autonomous structure of $\bC$ yields a natural isomorphism $\cur:\bC(\bon,B \pa A) \to \bC(\ort{A},B)$ that we will use to interpret abstraction
 (a natural retraction $\bC(\bon,B \pa A) \triangleleft \bC(\ort{A},B)$ would suffice anyway).

Starting from a valuation that associates $?$-algebras to atomic types and the object $\bot$ to falsity, the arrow-types are mapped as follows:
 $\tval{A \to B} = ?\ort{\tval{A}} \pa \tval{B}$. Given a context $\Delta = \seq \ga\asm\seq A$ we set $\tval{\Delta} = \tval{A_1} \pa \cdots \pa \tval{A_n}$.
 Note that all types are interpreted by $?$-algebras. Then the type judgements with assumptions $\Delta$ can be easily interpreted as morphisms with 
 target $\tval{\Delta}$; for example $\Int{\tystk{s}{\nil}{\bot}{\Delta}}: \bot \to \tval{\Delta}$ is the unit of the monoid $\tval{\Delta}$.
 We describe such interpretation for the particular case of the untyped stack calculus, for which we need a $?$-algebra $U$ of
 $\bC$ together with two $?$-algebra morphisms $\lam: ?\ort{U} \pa U \to U$ and $\app: U \to ?\ort{U} \pa U$ satisfying $\app \circ \lam = \id_{?\ort{U} \pa U}$
 and a $?$-algebra morphism $\vartheta: U \to \bot$ (needed for the stack $\nil$).
 We write $U^n$ for the $n$-fold $\pa$-product of $U$. Such product inherits a $?$-algebra structure $\alg_{U^n}$ defined using the algebra
 $\alg_U$ and the monoidality of the monad; as a consequence it also inherits a multiplication $\mul_{U^n}$ and a unit $\uni_{U^n}$. We also define
 ${\iota_j}^{n}: U \cong \bot^{j-1} \pa U \pa \bot^{n-j} \xrightarrow{\uni_{U^{j-1}}\pa\id_{U}\pa\uni_{U^{n-j}}} U^n$.

For all expressions $E$ with $\FV(E) \subseteq \seq \ga$ we define the interpretation
 $\Int{M}_{\seq \ga}: \ort{U} \to U^n$, $\Int{\pi}_{\seq \ga}: U \to U^n$ and $\Int{P}_{\seq \ga}: \bon \to U^n$
 as follows ($n=\leng{\seq \ga}$):
\[
\begin{array}{lll}
\Int{\ga_i}_{\seq \ga} = {\iota_j}^{n}
 & 
\Int{M\at\pi}_{\seq \ga} = \copair{\alg_{U^n} \circ ?\Int{M}_{\seq \ga}}{\Int{\pi}_{\seq \ga}} \circ \app
 &
\Int{\cdr\pi}_{\seq \ga} = \Int{\pi}_{\seq \ga} \circ \lam \circ (\uni_{\ort{U}}\pa\id_{U})\circ \rho_{U}
 \\ & & \\
\Int{\nil}_{\seq \ga} = \uni_{U^n} \circ \vartheta
 & 
\Int{\car\pi}_{\seq \ga} = \Int{\pi}_{\seq \ga} \circ \lam \circ (\der_{\ort{U}}\pa\uni_{U})\circ \lambda_{\ort{U}}
 & 
\Int{\bd{\gb}.P}_{\seq \ga} = \cur(\Int{P}_{\seq \ga,\gb})
 \\ & & \\
 & 
\Int{M\ap\pi}_{\seq \ga} = \copair{\id_{U^n}}{\Int{\pi}_{\seq \ga}}\circ\cur^{-1}(\Int{M}_{\seq \ga})
 & 
\end{array}
\]

Note that the denotations of stacks are $?$-algebra morphisms and it is not difficult to verify that the above interpretation is invariant under
 $\labelto{\s}$-reduction. To see that check before that $\Int{E\sub{\pi}{\gb}}_{\seq \ga} = \copair{\id_{\seq\ga}}{\Int{\pi}_{\seq\ga}} \circ \Int{E}_{\seq\ga,\gb}$.
 The category $\Rel$ of sets and relations is a $\ast$-autonomous linear category that satisfies all our requirements \cite{Mellies09}.
 If $S$ is a set, we denote by $\Omegatuple{S}$ the set of all the $\nat$-indexed sequences $\gs = (a_1,a_2,\dots)$ of multisets over $S$ such that
 $a_i = []$ holds for all but a finite number of indices $i\in \nat$.
 The set $\Omegatuple{S}$ is a simple example of $?$-algebra of $\Rel$. For $\gs = (a_1,a_2,\dots)$ and $\gt = (b_1,b_2,\dots)$, we define
 $\gs + \gt = (a_1 \mcup b_1,a_2\mcup b_2,\dots)$
 and $\ast = ([],[],\dots)$. Then the relations $\uni = \{(1,\ast)\}$ and $\mul = \{((\gs,\gt),\gs+\gt) \st \gs,\gt \in \Omegatuple{S}\}$
 make $(\Omegatuple{S},\mul,\uni)$ a $\pa$-monoid in $\Rel$. The operation $+$ on $\Omegatuple{S}$ can also be extended componentwise to $(\Omegatuple{S})^k$
 (whose elements are ranged over by $\seq \gs,\seq\gt,\ldots$) transferring thereby the monoid structure.
 In order to model the untyped calculus we need a $\pa$-monoid $U$ of together with two relations $\lam \subseteq (\Mfin{U} \times U) \times U$ and
 $\app \subseteq U \times (\Mfin{U} \times U)$ satisfying $\app \circ \lam = \id_{\Mfin{U} \times U}$ and a relation $\vartheta \subseteq U \times \{1\}$.
 In the category $\Rel$ lives one such object $\cD=(D,\app,\lam)$ that has already been encountered many times in the literature (see for example
 \cite{BucciarelliEM07}) as a model of the ordinary $\lambda$-calculus (as well as of some of its extensions). The object is constructed as union $D=\bigcup_{n\in\nat}D_n$ of a family of sets $(D_n)_{n\in\nat}$
 defined by $D_0=\emptyset$ and $D_{n+1}=\Omegatuple{D_n}$. Given $\gs=(a_1,a_2,a_3,\ldots)\in D$ and $a\in\Mfin{D}$, we write $a\cons\gs$ for the
 element $(a,a_1,a_2,a_3,\ldots)\in D$. Since $D = \Omegatuple{D}$, as previously observed it has a standard monoid structure
 and we can set $\lam = \{((a,\gs),a\cons\gs) \st a \in \Mfin{D},\ \gs \in D\}$ and $\app = \{(a\cons \gs,(a,\gs)) \st a \in \Mfin{D},\ \gs \in D\}$ satisfying
 the desired equation; as a matter of fact also the equation $\lam \circ \app = \id_{U}$ holds and the interpretation
 of expressions is invariant under $\labelto{\s\eta}$-reduction. Finally $\vartheta = \{(\ast,1)\}$.

The isomorphism $\cur:\bC(\bon,U \pa U) \to \bC(\ort{U},U)$ is trivially given by $\cur(f) = \{(\ga,\gb) \st (1,(\gb,\ga)) \in f\}$.
 The interpretation is concretely defined as follows:
\[
\begin{array}{l}
\Int{\ga_i}_{\seq \ga} = \{(\gs,(\ast \ldts \gs \ldts \ast)) \st \gs \in D \}; \qquad \qquad \qquad \qquad \quad
\Int{\cdr\pi}_{\seq \ga} = \{(\gs,\seq \gt) \st ([]\cons\gs,\seq \gt) \in \Int{\pi}_{\seq \ga} \}; \\
\Int{\car\pi}_{\seq \ga} = \{(\gs,\seq \gt) \st ([\gs]\cons\ast,\seq \gt) \in \Int{\pi}_{\seq \ga} \}; \qquad \qquad \qquad \quad
\Int{\bd{\gb}.P}_{\seq \ga} = \{(\gs,\seq \gt) \st (1,(\seq \gt,\gs)) \in\Int{P}_{\seq \ga,\gb} \}; \\
\Int{M\at\pi}_{\seq \ga} = \{([\gs_1\ldts\gs_k]\cons\gs,\Sigma_{i=0}^{k}\seq \gt_i) \st k\geq 0,\forall i=1\ldts k.\ (\gs_i,\seq \gt_i) \in \Int{M}_{\seq \ga},\ (\gs,\seq \gt_0) \in \Int{\pi}_{\seq \ga} \}; \\
\Int{M\ap\pi}_{\seq \ga} = \{(1,\seq\gt+\seq\gt') \st \exists \gs \in D.\ (\gs,\seq \gt) \in \Int{M}_{\seq \ga},\ (\gs,\seq \gt') \in \Int{\pi}_{\seq \ga}\};\qquad
\Int{\nil}_{\seq \ga} = \{(\ast,(\ast \ldts \ast))\}.
\end{array}
\]


For example for the stack calculus version of $\mathsf{call}/\mathsf{cc}$ we have\\
$\Int{\mu\ga.\cadr{\ga}{0}\ap(\mu\gb.\cadr{\gb}{0}\ap\cddr{\ga}{1})\at\cddr{\ga}{1}}=
\{ [[[\gs_1]\cons\ast\ldts[\gs_k]\cons\ast]\cons\gs_0]\cons(\Sigma_{i=0}^k\gs_i) \st k\geq 0,\ \gs_0\ldts\gs_k \in D\}$.

\section{Conclusions}\label{sec:conclusion}

We introduced the stack calculus, a finitary functional calculus with simple syntax and rewrite rules in which the calculi introduced so far in the Curry\textendash Howard correspondence for classical
 logic can be faithfully encoded; instead of exhibiting comparisons with all the existing formalisms, we just showed how Parigot's \lmu-calculus can
 be translated into our calculus. We proved that the untyped stack calculus enjoys confluence, and that types enforce strong normalization. The typed fragment is a sound and complete
 system for full implicational Classical Logic. The type system that Lafont et al. \cite{Stre93} use for the $\lambda$-calculus with pairs
 may be used to type stack expressions within the $\{\wedge,\neg,\bot\}$-fragment of Intuitionistic Logic: under this point of view,
 the stack calculus is the target-language of a CPS translation from itself that alters the types but not the expressions of the calculus.
 In the classically-typed system ($\{\to,\bot\}$-fragment of Classical Logic) the arrow type corresponds to the
 stack constructor; for this reason the realizability interpretation of types \`{a} la Krivine matches perfectly the logical meaning of the arrow in
 the type system. The proofs of soundness and strong normalization of the calculus are both given by particular realizability interpretations. We defined a Krivine machine that executes the terms of stack calculus. We showed how to encode control mechanisms like \emph{label/resume} and
 \emph{raise/handle} in the stack calculus which are soundly executed by our machine. This approach seems to be simpler than the extension of ML
 with exceptions studied in De Groote \cite{Groote95}. Inspired by Laurent and Regnier's work \cite{Laurent03}, we give a simple categorical framework to interpret the expressions of both typed
 and untyped stack calculus. We show how, in the case of a relational semantics, this famework allows a simple calculation of the interpretation
 of expressions.


\bibliographystyle{eptcs}
\bibliography{bibliography}

\begin{thebibliography}{10}
\providecommand{\bibitemdeclare}[2]{}
\providecommand{\surnamestart}{}
\providecommand{\surnameend}{}
\providecommand{\urlprefix}{Available at }
\providecommand{\url}[1]{\texttt{#1}}
\providecommand{\href}[2]{\texttt{#2}}
\providecommand{\urlalt}[2]{\href{#1}{#2}}
\providecommand{\doi}[1]{doi:\urlalt{http://dx.doi.org/#1}{#1}}
\providecommand{\bibinfo}[2]{#2}

\bibitemdeclare{inproceedings}{Herbelin03}
\bibitem{Herbelin03}
\bibinfo{author}{{Z.M}. \surnamestart Ariola\surnameend} \&
  \bibinfo{author}{H.~\surnamestart Herbelin\surnameend}
  (\bibinfo{year}{2003}): \emph{\bibinfo{title}{Minimal classical logic and
  control operators}}.
\newblock In: {\sl \bibinfo{booktitle}{ICALP}}, pp. \bibinfo{pages}{871--885},
  \doi{10.1007/3-540-45061-0\_68}.

\bibitemdeclare{inproceedings}{BucciarelliEM07}
\bibitem{BucciarelliEM07}
\bibinfo{author}{A.~\surnamestart Bucciarelli\surnameend},
  \bibinfo{author}{T.~\surnamestart Ehrhard\surnameend} \&
  \bibinfo{author}{G.~\surnamestart Manzonetto\surnameend}
  (\bibinfo{year}{2007}): \emph{\bibinfo{title}{Not Enough Points Is Enough}}.
\newblock In: {\sl \bibinfo{booktitle}{CSL}}, {\sl \bibinfo{series}{LNCS}}
  \bibinfo{volume}{4646}, pp. \bibinfo{pages}{298--312},
  \doi{10.1007/978-3-540-74915-8\_24}.

\bibitemdeclare{inproceedings}{Herbelin00}
\bibitem{Herbelin00}
\bibinfo{author}{P.-L \surnamestart Curien\surnameend} \&
  \bibinfo{author}{H.~\surnamestart Herbelin\surnameend}
  (\bibinfo{year}{2000}): \emph{\bibinfo{title}{The duality of computation}}.
\newblock In: {\sl \bibinfo{booktitle}{ACM SIGPLAN International Conference on
  Functional Programming}}, pp. \bibinfo{pages}{233--243},
  \doi{10.1145/351240.351262}.

\bibitemdeclare{article}{Cze77}
\bibitem{Cze77}
\bibinfo{author}{J.~\surnamestart Czermak\surnameend} (\bibinfo{year}{1977}):
  \emph{\bibinfo{title}{A Remark on {Gentzen}'s Calculus of Sequents}}.
\newblock {\sl \bibinfo{journal}{Notre Dame Journal of Formal Logic}}
  \bibinfo{volume}{18}(\bibinfo{number}{3}), pp. \bibinfo{pages}{471--474},
  \doi{10.1305/ndjfl/1093888021}.

\bibitemdeclare{incollection}{DanosJS95}
\bibitem{DanosJS95}
\bibinfo{author}{V.~\surnamestart Danos\surnameend}, \bibinfo{author}{{J.-B.}
  \surnamestart Joinet\surnameend} \& \bibinfo{author}{H.~\surnamestart
  Schellinx\surnameend} (\bibinfo{year}{1995}): \emph{\bibinfo{title}{{LKQ} and
  {LKT}: Sequent calculi for second order logic based upon dual linear
  decompositions of classical implication}}.
\newblock In \bibinfo{editor}{{J.-Y.} \surnamestart Girard\surnameend},
  \bibinfo{editor}{Y.~\surnamestart Lafont\surnameend} \&
  \bibinfo{editor}{L.~\surnamestart Regnier\surnameend}, editors: {\sl
  \bibinfo{booktitle}{Advances in linear logic}}, {\sl \bibinfo{series}{London
  Math. Society Lecture Note Series}} \bibinfo{volume}{222}.

\bibitemdeclare{article}{David01}
\bibitem{David01}
\bibinfo{author}{R.~\surnamestart David\surnameend} \&
  \bibinfo{author}{W.~\surnamestart Py\surnameend} (\bibinfo{year}{2001}):
  \emph{\bibinfo{title}{$\lambda\mu$-Calculus and {B\"{o}hm's} Theorem}}.
\newblock {\sl \bibinfo{journal}{J. Symb. Log.}}
  \bibinfo{volume}{66}(\bibinfo{number}{1}), pp. \bibinfo{pages}{407--413},
  \doi{10.2307/2694930}.

\bibitemdeclare{inproceedings}{Groote99}
\bibitem{Groote99}
\bibinfo{author}{P.~\surnamestart {De Groote}\surnameend}
  (\bibinfo{year}{1994}): \emph{\bibinfo{title}{On the relation between the
  $\lambda\mu$-calculus and the syntactic theory of sequential control}}.
\newblock In: {\sl \bibinfo{booktitle}{LPAR}}, pp. \bibinfo{pages}{31--43},
  \doi{10.1007/3-540-58216-9\_27}.

\bibitemdeclare{inproceedings}{Groote95}
\bibitem{Groote95}
\bibinfo{author}{P.~\surnamestart {De Groote}\surnameend}
  (\bibinfo{year}{1995}): \emph{\bibinfo{title}{A Simple Calculus of Exception
  Handling}}.
\newblock In: {\sl \bibinfo{booktitle}{TLCA}}, pp. \bibinfo{pages}{201--215}.

\bibitemdeclare{article}{Groote98}
\bibitem{Groote98}
\bibinfo{author}{P.~\surnamestart {De Groote}\surnameend}
  (\bibinfo{year}{1998}): \emph{\bibinfo{title}{An environment machine for the
  $\lambda\mu$-calculus}}.
\newblock {\sl \bibinfo{journal}{Math. Struct. Comp. Sci.}}
  \bibinfo{volume}{8}(\bibinfo{number}{6}), pp. \bibinfo{pages}{637--669},
  \doi{10.1017/S0960129598002667}.

\bibitemdeclare{article}{Felleisen92}
\bibitem{Felleisen92}
\bibinfo{author}{M.~\surnamestart Felleisen\surnameend} \&
  \bibinfo{author}{R.~\surnamestart Hieb\surnameend} (\bibinfo{year}{1992}):
  \emph{\bibinfo{title}{The Revised Report on the Syntactic Theories of
  Sequential Control and State}}.
\newblock {\sl \bibinfo{journal}{Theor. Comput. Sci.}} \bibinfo{volume}{103},
  pp. \bibinfo{pages}{235--271}, \doi{10.1016/0304-3975(92)90014-7}.

\bibitemdeclare{misc}{Gentzen35}
\bibitem{Gentzen35}
\bibinfo{author}{G.~\surnamestart Gentzen\surnameend} (\bibinfo{year}{1935}):
  \emph{\bibinfo{title}{Investigations into logical deduction}}.

\bibitemdeclare{article}{Gir86}
\bibitem{Gir86}
\bibinfo{author}{{J.-Y.} \surnamestart Girard\surnameend}
  (\bibinfo{year}{1986}): \emph{\bibinfo{title}{The system F of variable types,
  fifteen years later}}.
\newblock {\sl \bibinfo{journal}{Theor. Comput. Sci.}} \bibinfo{volume}{45},
  pp. \bibinfo{pages}{159--192}, \doi{10.1016/0304-3975(86)90044-7}.

\bibitemdeclare{article}{Gir91}
\bibitem{Gir91}
\bibinfo{author}{{J.-Y.} \surnamestart Girard\surnameend}
  (\bibinfo{year}{1991}): \emph{\bibinfo{title}{A new constructive logic:
  {Classical} {Logic}}}.
\newblock {\sl \bibinfo{journal}{Math. Struct. in Comp. Sci.}}
  \bibinfo{volume}{1}(\bibinfo{number}{3}), pp. \bibinfo{pages}{255--296},
  \doi{10.1017/S0960129500001328}.

\bibitemdeclare{inproceedings}{Griffin90}
\bibitem{Griffin90}
\bibinfo{author}{T.~\surnamestart Griffin\surnameend} (\bibinfo{year}{1990}):
  \emph{\bibinfo{title}{A Formulae-as-Types Notion of Control}}.
\newblock In: {\sl \bibinfo{booktitle}{POPL}}, pp. \bibinfo{pages}{47--58},
  \doi{10.1145/96709.96714}.

\bibitemdeclare{inproceedings}{Stre97}
\bibitem{Stre97}
\bibinfo{author}{M.~\surnamestart Hofmann\surnameend} \&
  \bibinfo{author}{T.~\surnamestart Streicher\surnameend}
  (\bibinfo{year}{1997}): \emph{\bibinfo{title}{Continuation Models are
  Universal for lambda-mu-Calculus}}.
\newblock In: {\sl \bibinfo{booktitle}{LICS}}, pp. \bibinfo{pages}{387--395},
  \doi{10.1109/LICS.1997.614964}.

\bibitemdeclare{inproceedings}{Howard80}
\bibitem{Howard80}
\bibinfo{author}{{W.A.} \surnamestart Howard\surnameend}
  (\bibinfo{year}{1980}): \emph{\bibinfo{title}{The formulae-as-types notion of
  construction}}.
\newblock In \bibinfo{editor}{J.R. \surnamestart Hindley\surnameend} \&
  \bibinfo{editor}{J.P. \surnamestart Seldin\surnameend}, editors: {\sl
  \bibinfo{booktitle}{To {H.B.} Curry: Essays on Combinatory Logic, Lambda
  Calculus and Formalism}}, pp. \bibinfo{pages}{479--490}.

\bibitemdeclare{article}{Klop89}
\bibitem{Klop89}
\bibinfo{author}{{J.W.} \surnamestart Klop\surnameend} \&
  \bibinfo{author}{{R.C.} \surnamestart de~Vrijer\surnameend}
  (\bibinfo{year}{1989}): \emph{\bibinfo{title}{Unique normal forms for lambda
  calculus with surjective pairing}}.
\newblock {\sl \bibinfo{journal}{Information and Computation}}
  \bibinfo{volume}{2}, pp. \bibinfo{pages}{97--113},
  \doi{10.1016/0890-5401(89)90014-X}.

\bibitemdeclare{article}{Krivine01}
\bibitem{Krivine01}
\bibinfo{author}{{J.-L.} \surnamestart Krivine\surnameend}
  (\bibinfo{year}{2001}): \emph{\bibinfo{title}{Typed lambda-calculus in
  classical {Zermelo-Fr{\ae}nkel} set theory}}.
\newblock {\sl \bibinfo{journal}{Arch. Math. Log.}}
  \bibinfo{volume}{40}(\bibinfo{number}{3}), pp. \bibinfo{pages}{189--205},
  \doi{10.1007/s001530000057}.

\bibitemdeclare{techreport}{Stre93}
\bibitem{Stre93}
\bibinfo{author}{Y.~\surnamestart Lafont\surnameend},
  \bibinfo{author}{B.~\surnamestart Reus\surnameend} \&
  \bibinfo{author}{T.~\surnamestart Streicher\surnameend}
  (\bibinfo{year}{1993}): \emph{\bibinfo{title}{Continuations Semantics or
  Expressing Implication by Negation}}.
\newblock \bibinfo{type}{Technical Report} \bibinfo{number}{9321},
  \bibinfo{institution}{Ludwig-Maximilians-Universitat, Munchen}.
\newblock \bibinfo{note}{Technical Report}.

\bibitemdeclare{unpublished}{Laurent03note}
\bibitem{Laurent03note}
\bibinfo{author}{O.~\surnamestart Laurent\surnameend} (\bibinfo{year}{2003}):
  \emph{\bibinfo{title}{Krivine's abstract machine and the lambda mu-calculus
  (an overview)}}.
\newblock \bibinfo{note}{Unpublished}.

\bibitemdeclare{article}{Laurent03b}
\bibitem{Laurent03b}
\bibinfo{author}{O.~\surnamestart Laurent\surnameend} (\bibinfo{year}{2003}):
  \emph{\bibinfo{title}{Polarized proof-nets and lambda-mu calculus}}.
\newblock {\sl \bibinfo{journal}{Theor. Comput. Sci.}}
  \bibinfo{volume}{290}(\bibinfo{number}{1}), pp. \bibinfo{pages}{161--188},
  \doi{10.1016/S0304-3975(01)00297-3}.

\bibitemdeclare{article}{Laurent11}
\bibitem{Laurent11}
\bibinfo{author}{O.~\surnamestart Laurent\surnameend} (\bibinfo{year}{2011}):
  \emph{\bibinfo{title}{Intuitionistic Dual-intuitionistic Nets}}.
\newblock {\sl \bibinfo{journal}{J. Log. Comput.}}
  \bibinfo{volume}{21}(\bibinfo{number}{4}), pp. \bibinfo{pages}{561--587},
  \doi{10.1093/logcom/exp044}.

\bibitemdeclare{inproceedings}{Laurent03}
\bibitem{Laurent03}
\bibinfo{author}{O.~\surnamestart Laurent\surnameend} \&
  \bibinfo{author}{L.~\surnamestart Regnier\surnameend} (\bibinfo{year}{2003}):
  \emph{\bibinfo{title}{About Translations of {Classical} {Logic} into
  {Polarized} {Linear} {Logic}}}.
\newblock In: {\sl \bibinfo{booktitle}{LICS}}, pp. \bibinfo{pages}{11--20},
  \doi{10.1109/LICS.2003.1210040}.

\bibitemdeclare{inproceedings}{Lengrand03}
\bibitem{Lengrand03}
\bibinfo{author}{S.~\surnamestart Lengrand\surnameend} (\bibinfo{year}{2003}):
  \emph{\bibinfo{title}{Call-by-value, call-by-name, and strong normalization
  for the classical sequent calculus}}.
\newblock {\sl \bibinfo{series}{Elec. Notes in Theor. Comp.
  Sci.}}~\bibinfo{volume}{86}, \doi{10.1016/S1571-0661(05)82619-2}.
\newblock \bibinfo{note}{WRS}.

\bibitemdeclare{inproceedings}{LowS06}
\bibitem{LowS06}
\bibinfo{author}{T.~\surnamestart L{o}w\surnameend} \&
  \bibinfo{author}{T.~\surnamestart Streicher\surnameend}
  (\bibinfo{year}{2006}): \emph{\bibinfo{title}{Universality Results for Models
  in Locally Boolean Domains}}.
\newblock In: {\sl \bibinfo{booktitle}{CSL}}, pp. \bibinfo{pages}{456--470},
  \doi{10.1007/11874683\_30}.

\bibitemdeclare{misc}{Mellies09}
\bibitem{Mellies09}
\bibinfo{author}{P.-A. \surnamestart Melli\`{e}s\surnameend}:
  \emph{\bibinfo{title}{Categorical semantics of linear logic}}.
\newblock
  \urlprefix\url{http://www.pps.jussieu.fr/~mellies/papers/panorama.pdf}.
\newblock \bibinfo{note}{Panoramas et Synth\`{e}ses 27, Soci\'{e}t\'{e}
  Math\'{e}matique de France, 2009}.

\bibitemdeclare{inproceedings}{Parigot92}
\bibitem{Parigot92}
\bibinfo{author}{M.~\surnamestart Parigot\surnameend} (\bibinfo{year}{1992}):
  \emph{\bibinfo{title}{$\lambda\mu$-calculus: An Algorithmic Interpretation of
  {Classical} {Natural} {Deduction}}}.
\newblock In: {\sl \bibinfo{booktitle}{LPAR}}, pp. \bibinfo{pages}{190--201},
  \doi{10.1007/BFb0013061}.

\bibitemdeclare{book}{Prawitz65}
\bibitem{Prawitz65}
\bibinfo{author}{D.~\surnamestart Prawitz\surnameend} (\bibinfo{year}{1965}):
  \emph{\bibinfo{title}{Natural Deduction - a proof theoretical study}}.
\newblock \bibinfo{publisher}{Almqvist \& Wiksell},
  \bibinfo{address}{Stokholm}.

\bibitemdeclare{article}{Stre98}
\bibitem{Stre98}
\bibinfo{author}{B.~\surnamestart Reus\surnameend} \&
  \bibinfo{author}{T.~\surnamestart Streicher\surnameend}
  (\bibinfo{year}{1998}): \emph{\bibinfo{title}{Classical Logic, Continuation
  Semantics and Abstract Machines}}.
\newblock {\sl \bibinfo{journal}{J. Funct. Program.}}
  \bibinfo{volume}{8}(\bibinfo{number}{6}), pp. \bibinfo{pages}{543--572},
  \doi{10.1017/S0956796898003141}.

\bibitemdeclare{article}{Selinger01}
\bibitem{Selinger01}
\bibinfo{author}{P.~\surnamestart Selinger\surnameend} (\bibinfo{year}{2001}):
  \emph{\bibinfo{title}{Control categories and duality: on the categorical
  semantics of the lambda-mu calculus}}.
\newblock {\sl \bibinfo{journal}{Math. Struct. in Comp. Sci.}}
  \bibinfo{volume}{11}, pp. \bibinfo{pages}{207--260},
  \doi{10.1017/S096012950000311X}.

\bibitemdeclare{book}{Smullyan68}
\bibitem{Smullyan68}
\bibinfo{author}{R.~\surnamestart Smullyan\surnameend} (\bibinfo{year}{1968}):
  \emph{\bibinfo{title}{First-order logic}}.
\newblock \bibinfo{publisher}{Springer-Verlag}, \bibinfo{address}{New York},
  \doi{10.1007/978-3-642-86718-7}.

\bibitemdeclare{phdthesis}{UrbanThese}
\bibitem{UrbanThese}
\bibinfo{author}{C.~\surnamestart Urban\surnameend} (\bibinfo{year}{2000}):
  \emph{\bibinfo{title}{Classical Logic and Computation}}.
\newblock Ph.D. thesis, \bibinfo{school}{University of Cambridge Comp.
  Laboratory}.

\end{thebibliography}
\end{document}